\documentclass{article}

\usepackage{microtype}
\usepackage{graphicx}
\usepackage{subfigure}
\usepackage{subcaption}
\usepackage{booktabs} 
\usepackage{acronym}
\usepackage{dsfont}
\usepackage{changes}

\usepackage{hyperref}

\usepackage[preprint]{icml2026}

\usepackage{amsmath}
\usepackage{amssymb}
\usepackage{mathtools}
\usepackage{amsthm}

  \newcommand{\ie}{\textit{i.e.}}

  \newcommand{\dist}[1]{\mathcal{D}(#1)}

\newcommand{\reals}{\mathbf{R}}

  \acrodef{mdp}[MDP]{Markov decision process}

\newcommand{\Expect}{\mathds{E}}

\newcommand{\calA}{\mathcal{A}}

 \acrodef{ltl}[LTL]{Linear Temporal Logic}
  \acrodef{ltlf}[LTLf]{Linear Temporal Logic over Finite Traces}
 \acrodef{dfa}[DFA]{Deterministic Finite Automaton}
 \acrodef{pdfa}[PDFA]{Preference DFA}

\newcommand{\matO}{{\mathbf{O}}}
\newcommand{\matT}{{\mathbf{T}}}
\newcommand{\matA}{{\mathbf{A}}}

\acrodef{mdp}[MDP]{Markov decision process}

\acrodef{asw}[ASW]{Almost-Sure Winning}

\newcommand{\calS}{\mathcal{S}}

\newcommand{\calO}{\mathcal{O}}

\usepackage{mathtools}
\usepackage{mathrsfs}

\usepackage[capitalize,noabbrev]{cleveref}

\theoremstyle{plain}
\newtheorem{theorem}{Theorem}[section]

\newtheorem{lemma}[theorem]{Lemma}

\theoremstyle{definition}
\newtheorem{definition}[theorem]{Definition}
\newtheorem{assumption}[theorem]{Assumption}
\theoremstyle{remark}
\newtheorem{remark}[theorem]{Remark}

\definecolor{UForange}{RGB}{250,70,22} 

\icmltitlerunning{Submission and Formatting Instructions for ICML 2026}

\begin{document}

\twocolumn[
  \icmltitle{C-IDS: Solving Contextual POMDP via Information-Directed Objective}



  \icmlsetsymbol{equal}{*}

  \begin{icmlauthorlist}
    \icmlauthor{Chongyang Shi}{yyy}
    \icmlauthor{Michael Dorothy}{comp}
    \icmlauthor{Jie Fu}{yyy}
  \end{icmlauthorlist}

  \icmlaffiliation{yyy}{Department of Electrical and Computer Engineering, University of Florida, Gainesville, Florida 32608, USA}
  \icmlaffiliation{comp}{Aerospace Engineering, DEVCOM Army Research Laboratory, Adelphi, MD 21001, USA}

  \icmlcorrespondingauthor{Chongyang Shi}{c.shi@ufl.edu}

  \icmlkeywords{Contextual POMDP, Regret Analysis, Information Theory}

  \vskip 0.3in
]



\printAffiliationsAndNotice{}  

\begin{abstract}
We study the policy synthesis problem in contextual partially observable Markov decision processes (CPOMDPs), where the environment is governed by an unknown latent context that induces distinct POMDP dynamics. Our goal is to design a policy that simultaneously maximizes cumulative return and actively reduces uncertainty about the underlying context. 
We introduce an information-directed objective that augments reward maximization with mutual information between the latent context and the agent’s observations. We develop the C-IDS algorithm to synthesize policies that maximize the information-directed objective. We show that the objective can be interpreted as a Lagrangian relaxation of the linear information ratio and prove that the temperature parameter is an upper bound on the information ratio. Based on this characterization, we establish a sublinear Bayesian regret bound $\mathcal{O}(T R_{\max}\sqrt{K \log |\mathcal{C}|/\eta})$ over $K$ episodes.
We evaluate our approach on a continuous Light--Dark environment and show that it consistently outperforms standard POMDP solvers that treat the unknown context as a latent state variable, achieving faster context identification and higher returns.
\end{abstract}

\section{Introduction}
Partially Observable Markov Decision Processes (POMDPs) provide a principled framework for sequential decision making under uncertainty when the agent cannot directly observe the true system state. Such settings are common in autonomous driving~\cite{Brechtel2014probabilistic, Bai2015intention}, surveillance and rescue~\cite{Egorov2016Target, Wandzel2019Multi, zhang2024shrinkingpomcpframeworkrealtime}, and human-robot interaction~\cite{Chen2020Trust, Burks2023harps}, where the agent must act based on noisy observations. While POMDPs provide a   framework for reasoning under state uncertainty, they typically assume a single, fixed model of the environment. In contrast, many practical systems, such as robotic navigation under changing sensing conditions, adaptive assistance, and exploratory planning, are better described by a family of POMDPs, each corresponding to a distinct but unobserved context that governs the underlying transition, observation, and reward structures. 

To capture this setting, we study Contextual Partially Observable Markov Decision Processes (CPOMDPs), which generalize both contextual MDPs and classical POMDPs. In a CPOMDP, the environment is governed by an unobserved context variable $c$ that remains fixed within an episode and induces a distinct POMDP model. The agent does not observe the context directly and must infer it through interaction using noisy and partial observations of the system state. As a result, optimal decision making requires jointly optimizing task performance and information acquisition to identify the underlying context.

\begin{figure}[t]
  \vskip 0.2in
  \begin{center}
    \centerline{\includegraphics[width=\columnwidth]{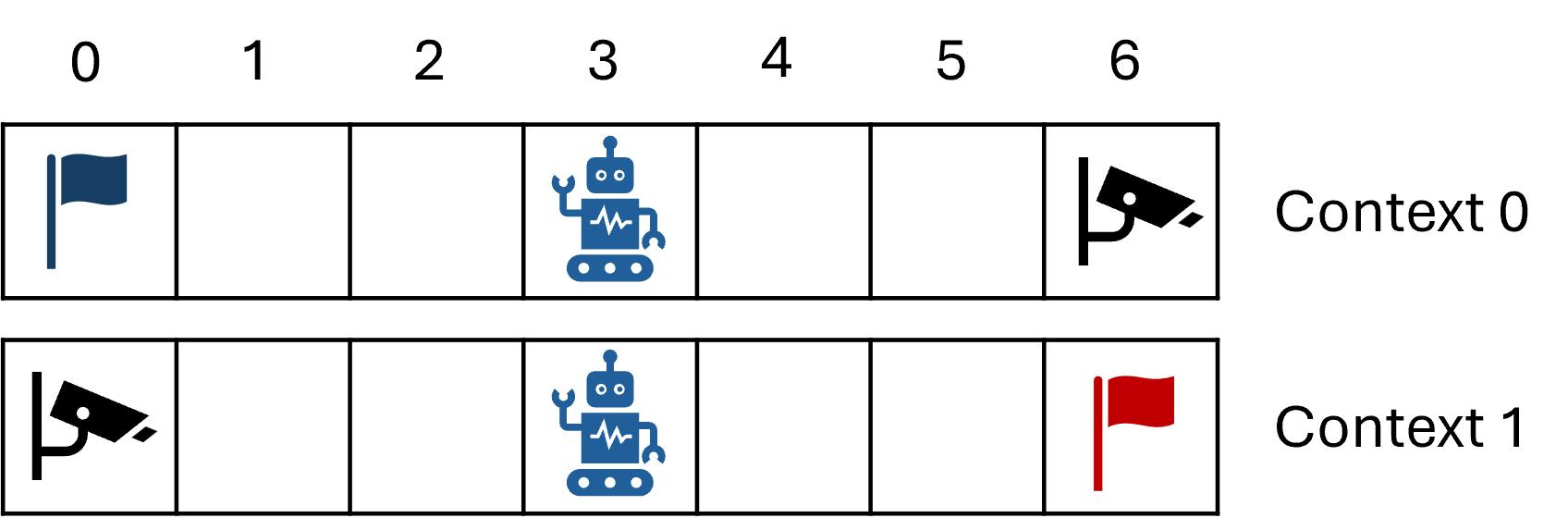}}
    \caption{ A robot moves in a line grid, with seven cells. At flag cells, the robot can receive some reward. At detector cell, the robot will be detected and receive penalty. Context 0: Cell 0 has high-value target with reward, and cell $6$ is equipped with a detector. Cell 0 has a low-value target with reward 10. Context 1: Cell $6$ has high-value target with reward, and cell $1$ is equipped with a detector. The robot can choose to move to one of the adjacent cell or the robot uses ``sense'' action to detect the presence of detector to its left or right.}
    \label{fig:line_grid}
  \end{center}
\end{figure}

Let us consider a simple line-grid environment (Figure~\ref{fig:line_grid}). An optimal policy for the robot first selects the \emph{sense} action to infer the location of the detector. Once the detector’s position is identified, the robot moves in the opposite direction to safely reach the high-value target. Specifically, if the detector is detected on the right, the robot moves left, and vice versa. In this example, successful behavior requires the robot to actively infer the context through sensing actions, as naively committing to a single direction without information gathering may result in detection penalties.

This dual objective distinguishes CPOMDPs from standard POMDPs and contextual MDPs \cite{Hallak2015CMDP}. Standard POMDP treats context as an augmented latent state and applying generic POMDP solvers often leads to undesirable policies, while contextual MDP modeling assumes full state observability, which is not satisfied for many applications. CPOMDPs therefore pose a fundamental challenge in balancing reward maximization with context inference. 

\paragraph{Contribution.}
To address this challenge, we propose an information-directed objective that augments reward maximization with a mutual information regularizer between the latent context and the agent’s observations. Building on this formulation, we develop a variational policy gradient algorithm 
that optimizes an entropy-regularized objective by matching the policy-induced trajectory distribution to a Gibbs target distribution, inspired by \cite{Levine2018ReinforcementLA}. To evaluate how this information-directed objective balances the tradeoff between exploration and exploitation in CPOMDPs, we introduce  \emph{linear information ratio}, defined by the regret per unit information gain. This metric is a variant of information ratio, traditionally defined as the squared regret per unit information gain, studied in multi-armed bandit problems \cite{Russo2014learning}. We show that this objective admits a principled interpretation through information-directed sampling: maximizing the regularized value is equivalent to minimizing a Lagrangian relaxation of the linear information ratio. Under this characterization, the regularization weight serves as an upper bound on the linear information ratio. Leveraging this connection, we establish a sublinear Bayesian regret bound for the proposed CPOMDP solver, providing formal guarantees on both exploration efficiency and task performance. Finally, we validate the effectiveness of C-IDS in a continuous Light–Dark environment~\cite{Fischer2020pf} with uncertain contexts, demonstrating faster context identification, improved returns, and significant performance gains over baseline POMDP solvers that do not explicitly reason about context uncertainty.


\paragraph{Related Work.}
Contextual decision-making has been studied extensively in reinforcement learning through the framework of \emph{Contextual Markov Decision Processes} (CMDPs). \citet{Hallak2015CMDP} introduce CMDPs to model families of MDPs indexed by a latent context that remains fixed within each episode, and propose algorithms with theoretical guarantees for learning policies that generalize across contexts. Subsequent work extends this framework by considering richer assumptions on how context influences the environment. In particular, Modi et al.~\cite{Modi2017SideInformation} study MDPs with continuous side information, where the context is observed and parameterizes the transition and reward functions, enabling efficient exploration and regret bounds under smoothness assumptions. \citet{Modi2018GLMCMDP} further develops CMDPs with structured context dependence, including generalized linear models for transition dynamics, and focuses on regret-optimal online learning.

More recently, contextual MDPs have been embedded into hierarchical and bilevel optimization frameworks. \citet{Hu2024BilevelCMDP} formulate a stochastic bilevel optimization problem in which a leader selects contextual parameters that induce a lower-level CMDP solved by a follower, and develop gradient-based methods for optimizing the upper-level objective. These bilevel CMDP formulations are motivated by mechanism design, where the context serves as a controllable   parameter affecting the lower-level MDP.

While these works extend MDPs to account for contextual variability, they differ from the CPOMDP studied in this paper. Existing CMDP formulations   assume full observability of the system state and focus on learning or optimizing policies across context-indexed MDPs. In contrast, CPOMDPs combine latent context uncertainty with partial observability of states, requiring the agent to actively infer the underlying context through noisy observations while simultaneously optimizing context-dependent task performance. 

We introduce information-directed objective to balance the trade-off between inference and reward optimization. Our work is closely related to the information-directed sampling. \citet{Russo2016TS} analyze Thompson sampling through information theory and introduce regret bounds that scale with the entropy of the optimal action, formalizing how information gain controls learning efficiency. Building on this viewpoint, information-directed sampling (IDS) selects actions by explicitly trading off expected regret against information gain, via minimizing an information ratio that compares (squared) instantaneous regret to mutual information about the optimal decision \cite{Russo2014learning}. More recently, \citet{neuOptimisticInformationDirectedSampling} propose \emph{optimistic} variants of IDS that achieve similar regret guarantees without relying on Bayesian priors, bridging Bayesian IDS with worst-case online learning analyses. Extending IDS beyond bandits, Hao et al.\ \cite{Hao2022IDRL} develop information-directed exploration principles for reinforcement learning and establish regret bounds for IDS-style algorithms in episodic MDP settings, including tractable regularized surrogates. 

While these works share the principle of balancing performance and information acquisition, they differ substantially from our setting and technical contributions. Existing analyses on IDS and information-ratio  primarily focus on bandits or fully observable MDPs, where the learning target is typically the optimal arm/action or unknown model parameters, and observations provide relatively direct feedback about that target \cite{Russo2014learning,Russo2016TS,Hao2022IDRL}. In contrast, we study CPOMDP, where the agent must estimate the latent states and infer the context through noisy observation trajectories. In CPOMDPs, minimizing the standard squared information ratio is computationally challenging due to the high-dimensional belief and observation spaces. To address this issue, we introduce a linear information ratio, which admits a more tractable optimization algorithm. By leveraging fractional programming, we establish a direct connection between minimizing the linear information ratio and maximizing the proposed information-directed objective. Importantly, despite replacing the squared information ratio with its linear counterpart, the resulting CPOMDP solver still achieves a sublinear Bayesian regret bound, demonstrating that the improved computational efficiency does not come at the expense of theoretical performance guarantees.

\section{Contextual Partially Observable Markov Decision Process}

A Contextual  Partially Observable Markov Decision Process
(CPOMDP) is a tuple  
\[
M = \langle  \mathcal{C}, \calS,  \calA, \{ M_i, i \in \mathcal{C}\}\rangle 
\]
where 
\begin{itemize}
\item $\mathcal{C} =\{1,\ldots N\}$ is a finite set of context space.
\item $\calS$ is a set of states.
\item $\calA$ is a set of   actions.
\item $M_i $ is a POMDP with state, action spaces $\calS$ and $\calA$, and  \begin{itemize}
\item $\{P_i\}$ is a finite set of transition functions. For type $i$, the interaction between the agent and its environment is defined by $P_i: \calS\times \calA \rightarrow \dist{\calS}$. 
\item $\{E_i\}$ is a finite set of emission functions of the opponent. For type $i$, the opponent's observation function is $E_i:S\rightarrow \dist{\calO}$.
\item $\{r_i\}$ is a finite set of reward functions.
\item $\mu_i$ is an initial distribution over the   state.
\end{itemize} 
\end{itemize}

A CPOMDP is simply a set of models that shareare the
same state and action space.

For any context \( c \) and policy \( \pi \), define the episodic value
\[
V_c(\pi) := \Expect_{\pi} \left[\sum_{t=0}^{T-1} r_c(S_t, A_t)\right],
\]
where $T$ is the length of horizon. 

Given two contexts $c_i, c_j \in C$, we say that they are \emph{$\epsilon$-policy invariant} if
\[
\max_{\pi} V_{c_i}(\pi) - V_{c_i}(\pi_j^\ast) \le \epsilon,
\]
and
\[
\max_{\pi} V_{c_j}(\pi) - V_{c_j}(\pi_i^\ast) \le \epsilon,
\]
where $\pi_i^\ast \in \arg\max_{\pi} V_{c_i}(\pi)$ and
$\pi_j^\ast \in \arg\max_{\pi} V_{c_j}(\pi)$
denote optimal policies under contexts $c_i$ and $c_j$, respectively.
We write $c_i \sim_\epsilon c_j$ when the two contexts are $\epsilon$-policy invariant.

\begin{assumption}
\label{assume:not-policy-invariant}
Given a constant $\epsilon > 0$, the context set $C$ is \emph{pairwise not $\epsilon$-policy invariant}; that is, for any $c_i, c_j \in C$ with $i \neq j$,
\[
c_i \not\sim_\epsilon c_j .
\]
\end{assumption}

If there is a pair of contexts in $C$ that are   $\epsilon$-policy invariant, we can always remove one of the contexts to satisfy this assumption. In the following, we consider CPOMDPs satisfying Assumption~\ref{assume:not-policy-invariant}.


\subsection{Planning to Maximize Information-Directed Objective}
 

We consider an agent that interacts repeatedly with an unknown environment over a sequence of finite-horizon episodes. In each episode, an unobserved context is sampled at the beginning and remains fixed throughout the episode, while potentially changing across episodes. Our goal is to compute a policy that maximizes the finite-horizon entropy-regularized total reward while actively reducing uncertainty about the latent context. This episodic context variability, combined with partial observability, makes learning an optimal policy particularly challenging for standard reinforcement learning algorithms that assume a fixed environment model.

At the $k$-th episode, the agent executes a policy $\pi$ for a horizon $T$ and observes a trajectory $Y^{k} := (O^{k}_{0:T}, A^{k}_{0:T-1})$,
where $O^{k}_{t}$ and $A^{k}_{t}$ denote the observation and action at time step $t$ of episode $k$, respectively. and let $\mathcal H_{k} := (Y^0,\dots,Y^{k})$ be  a history of observations.
Let $C$ be a context random variable referring to the belief about the context. The uncertainty in the context given some state-action observation $y_{0:t} = (o_{0:t}, a_{0:t-1})$ can be measured using the conditional entropy $H(C|y_{0:t})$. Note that this conditional entropy is independent from a control policy because the action sequence is included in $y_{0:t}$.

Let $P (C| \mathcal{H}_{k-1})  $ be the posterior estimate of the context given a history of observations $\mathcal H_{k-1}$. Define the posterior-expected value
\begin{equation}
\label{eq:posterior-expected}
\bar V_k(\pi)
:= \Expect_{c\sim P(C| \mathcal{H}_{k-1})} \!\left[V_c(\pi) \right],
 \end{equation}

  \paragraph{$C$-IDS: Maximizing a weighted combination of reward and information gain}
At episode $k$, the contextual IDS ($C$-IDS) algorithm selects a policy by solving
\begin{equation}
\label{eq:objective_function}
\max_{\pi}
\;\Expect_\pi\!\left[
\bar V_k(\pi)
+
\tau\, I\!\left(C; Y^k \mid \mathcal{H}_{k-1}\right)
\right],
\end{equation}
where $\tau > 0$ is a temperature parameter that balances reward maximization and context uncertainty reduction.
The expectation is taken with respect to the randomness induced by the policy $\pi$ and the environment dynamics.

Using the identity $I(X;Y \mid Z) = H(X \mid Z) - H(X \mid Y, Z)$ and $H(C|\mathcal{H}_{k-1})$ is independent from the policy $\pi$ used at the $k$-th episode, the objective in~\eqref{eq:objective_function} can be equivalently written as
\begin{equation}
\label{eq:optimize-c-ids}
\max_{\pi}
\;\Expect_\pi\!\left[
\bar V_k(\pi)
-
\tau\, H\!\left(C \mid \mathcal{H}_{k}\right)
\right].
\end{equation}





\subsection{Variational Policy Gradient for Optimizing Information-Directed Objective}

We consider a variational approach to solve  the optimal policy in \eqref{eq:optimize-c-ids}. Suppose we can directly select the joint distribution over context and observation, $Q(C, Y_{0:T})$, the following result indicates how to make this choice.

\begin{lemma}
\label{lem:Gibbs}
The entropy-regulated optimal  distribution $Q$   that maximizes  \[
Q^\ast = \arg\max_{Q} \left( \Expect_{(y,c)\sim Q}\left[ R_c(y) + \tau \log P(c|y) \right] + \tau H(Q) \right),
\]
where $ H(Q)$ is the Shannon entropy of $Q$, 
is given by:
\[
Q^\ast(y,c) \propto P(c|y) \exp\left(\frac{R_c(y)}{\tau}\right)
\]
where $ R_c(y) = \Expect_i \left[\sum_{t=0}^T R_c(S_t, A_t)|y\right]$.
\end{lemma}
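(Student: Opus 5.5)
This is the standard Gibbs variational principle: rewrite the objective as a negative KL divergence up to a constant, then read off the maximizer. Treat $Q$ as a probability distribution over the finite (or discretized) product space of pairs $(y,c)$, with $R_c(y)$ and $P(c\mid y)$ fixed functions. For the continuous-observation case, replace sums by integrals and $H(Q)$ by the differential entropy; the argument is unchanged.

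The first step is to define the normalizer and the candidate maximizer:
\[
Z := \sum_{y,c} P(c\mid y)\exp\!\left(\frac{R_c(y)}{\tau}\right), \qquad
Q^\ast(y,c) := \frac{1}{Z}\,P(c\mid y)\exp\!\left(\frac{R_c(y)}{\tau}\right).
\]
Then $\log Q^\ast(y,c) = \log P(c\mid y) + R_c(y)/\tau - \log Z$, so
\[
R_c(y) + \tau\log P(c\mid y) = \tau\log Q^\ast(y,c) + \tau \log Z .
\]
Substituting this identity into the objective and writing $H(Q) = -\Expect_Q[\log Q]$ gives
\[
\Expect_{Q}\!\left[R_c(y)+\tau\log P(c\mid y)\right] + \tau H(Q)
= \tau\log Z - \tau\,\Expect_{Q}\!\left[\log\frac{Q(y,c)}{Q^\ast(y,c)}\right]
= \tau\log Z - \tau\, D_{\mathrm{KL}}(Q\,\|\,Q^\ast).
\]

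Since $\tau>0$ and the KL divergence is nonnegative with equality iff $Q=Q^\ast$ (Gibbs' inequality), the objective is bounded above by $\tau\log Z$, and this bound is attained exactly at $Q=Q^\ast$. This proves $Q^\ast(y,c)\propto P(c\mid y)\exp(R_c(y)/\tau)$ and shows the optimal value is $\tau\log Z$.

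Alternatively, one can form the Lagrangian with the constraint $\sum_{y,c} Q(y,c)=1$ and use stationarity together with strict concavity of the entropy. I would use the KL argument because it gives global optimality and uniqueness directly.

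The main obstacle is the well-posedness bookkeeping. We need $Z<\infty$, which holds if rewards are bounded, e.g.\ by $TR_{\max}$. We also need to handle pairs with $P(c\mid y)=0$: there $\log P(c\mid y)=-\infty$, so any $Q$ that puts mass on such a pair has value $-\infty$. The maximizer is therefore supported on $\{P(c\mid y)>0\}$, consistent with $Q^\ast$ vanishing there. Once these points are checked, the rest is the routine algebra above.
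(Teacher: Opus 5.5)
Your proof is correct and is essentially the paper's argument. The paper rewrites the C-IDS objective as $\sum_{y,c}Q(c,y)\bigl[R_c(y)+\tau\log P(c\mid y)\bigr]$ and then simply asserts the exponential (Gibbs) maximizer; your identity $\tau\log Z-\tau D_{\mathrm{KL}}(Q\,\|\,Q^\ast)$ supplies the optimality and uniqueness argument the paper leaves implicit, and treating $P(c\mid y)$ as a fixed model posterior rather than the conditional of $Q$ is exactly what the stated conclusion requires. One caveat on your continuous aside: there, bounded rewards do not give $Z<\infty$, since $\sum_c P(c\mid y)=1$ makes $Z$ at least a positive constant times the Lebesgue volume of the space of observation sequences, so that extension needs entropy relative to a finite base measure on $y$ rather than differential entropy.
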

\begin{proof}
See Appendix~\ref{app:proof}.
\end{proof}

Note that the normalization constant
\[
Z := \sum_{c} \sum_{y} P(C = c \mid y)\exp\!\left( \frac{R_c(y)}{\tau} \right)
\]
is independent of the policy $\pi$, since the action sequence contained in $y$ is fixed.

\begin{equation}
\label{eq:target_distribution}
Q^\ast(c, y)
=
\frac{
P(C = c \mid y)\exp\!\left( \frac{R_c(y)}{\tau} \right)
}{
Z
}.
\end{equation}

Taking $Q^\ast(Y)$ as the target distribution, the variational approach seeks a policy $\pi$ that minimizes the divergence between $P_\pi(Y)$ and $Q^\ast(Y)$, that is,
\begin{equation}
\label{eq:vpg_min}
\pi \in \arg\min_{\pi } = D_{KL}(P_\pi \| Q)
\end{equation}
where
\[
P_\pi(Y = y)
=
\sum_{c} P(Y = y \mid C = c, \pi)\, P(C = c).
\]

\begin{lemma}
\label{lem:vpg}
Let $\theta$ denote the parameters of the policy $\pi_\theta$ (or any model inducing $P_\theta$).
The KL divergence is 
\begin{equation}
\mathcal{L}(\theta) = D_{KL}(P_\theta \| Q) \\
= \Expect_{(y, c) \sim P_\theta} \!\left[\log \frac{P_\theta(y,c)}{Q(y,c)}\right]
\end{equation}
The gradient is 
\begin{equation}
\label{eq:variational_gradient}
\nabla_\theta \mathcal{L}(\theta) = \Expect_{(y, c) \sim P_\theta} \!\left[\left(\log P_\theta(y) - \frac{R_c(y)}{\tau} \right) \nabla_\theta \log P_\theta(y|c) \right]
\end{equation}
\end{lemma}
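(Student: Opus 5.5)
We compute the gradient by the score-function (likelihood-ratio) method. Throughout, the joint is factored as $P_\theta(y,c) = P(c)\,P_\theta(y\mid c)$ with a $\theta$-independent prior $P(c)$, and the target in \eqref{eq:target_distribution} is written in log form as
\[
\log Q(y,c) = \log P(c\mid y) + \frac{R_c(y)}{\tau} - \log Z .
\]
Two facts from the preceding discussion drive the argument. First, the normalizer $Z$ does not depend on $\theta$. Second, $P(c\mid y)$ is a posterior computed from the fixed models $\{M_i\}$ and the trajectory $y$. Because $y$ contains the action sequence, the policy factors $\pi_\theta(a_t\mid\cdot)$ cancel in Bayes' rule. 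Hence $P(c\mid y)$ is also independent of $\theta$, and so is $R_c(y)$.

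\textbf{Step 1 (decompose the integrand).} Write $\log P_\theta(y,c) = \log P_\theta(y) + \log P(c\mid y)$. By the second fact above, the posterior $P_\theta(c\mid y)$ equals $P(c\mid y)$. Then
\[
\log\frac{P_\theta(y,c)}{Q(y,c)} = \log P_\theta(y) - \frac{R_c(y)}{\tau} + \log Z ,
\]
because the $\log P(c\mid y)$ terms cancel. This cancellation is exactly why the Gibbs form from Lemma~\ref{lem:Gibbs} is convenient. Denote the integrand by $f_\theta(y,c)$.

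\textbf{Step 2 (differentiate).} We have
\[
\nabla_\theta \mathcal L = \nabla_\theta \sum_{c,y} P(c)\,P_\theta(y\mid c)\, f_\theta(y,c),
\]
and the product rule gives two terms.

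\emph{First term.} It is $\Expect_{P_\theta}[f_\theta \nabla_\theta\log P_\theta(y\mid c)]$, using $\nabla P_\theta(y\mid c) = P_\theta(y\mid c)\nabla\log P_\theta(y\mid c)$.

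\emph{Second term.} It is $\Expect_{P_\theta}[\nabla_\theta f_\theta] = \Expect_{P_\theta}[\nabla_\theta \log P_\theta(y)]$. Marginalizing over $c$, this equals $\sum_y \nabla_\theta P_\theta(y) = \nabla_\theta 1 = 0$.

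\emph{Constant.} The $\log Z$ part of the first term is also zero, since $\Expect[\nabla\log P_\theta(y\mid c)] = \sum_c P(c)\nabla\sum_y P_\theta(y\mid c) = 0$. What remains is exactly \eqref{eq:variational_gradient}.

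\textbf{Main obstacle.} The delicate step is justifying that $P(c\mid y)$ and $R_c(y)$ carry no $\theta$-dependence. Concretely, one must show that
\[
P_\theta(y\mid c) = \Big[\prod_t \pi_\theta(a_t\mid o_{0:t},a_{0:t-1})\Big]\, g_c(y),
\]
where $g_c(y)$ sums the transition and emission probabilities over hidden states. I would establish this by expanding $P_\theta(y\mid c)$ as a sum over hidden state sequences under the observation-based policy. The policy factor then pulls out and cancels in the posterior.

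A useful by-product is that $\nabla_\theta\log P_\theta(y\mid c) = \sum_t \nabla_\theta\log\pi_\theta(a_t\mid\cdot)$, so the gradient is estimable from sampled trajectories. Finally, interchanging $\nabla$ with the sum needs only finiteness or dominated convergence, which I would assume.
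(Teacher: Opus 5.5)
Your proof is correct and follows the same route as the paper. You substitute the Gibbs target so that the $\theta$-independent posterior $P(c\mid y)$ cancels, leaving $\log P_\theta(y) - R_c(y)/\tau + \log Z$, then apply the likelihood-ratio trick with $\nabla_\theta \log P_\theta(y,c) = \nabla_\theta \log P_\theta(y\mid c)$; your product-rule bookkeeping (showing $\Expect[\nabla_\theta \log P_\theta(y)] = 0$, that the $\log Z$ term drops out, and that $P(c\mid y)$ and $R_c(y)$ do not depend on $\theta$) spells out what the paper's one-line appeal to that trick leaves implicit.
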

\begin{proof}
See Appendix~\ref{app:proof}.
\end{proof}




\section{Regret Analysis of $C$-IDS Algorithm}


 We consider an episodic setting in which an agent interacts with an unknown environment modeled as a CPOMDP with an unknown context over multiple episodes. The latent context is assumed to remain fixed across all episodes. Using the $C$-IDS algorithm (Algorithm~\ref{alg:regret_algorithm}), the agent chooses a sequence of policies that optimize the information-directed objective \eqref{eq:optimize-c-ids} given the current posterior estimate in each episode. 
Our goal  is to show that, by doing so,  the agent can  achieve vanishing regret over episodes.

Let \( c^\star \) denote the true (but unknown) context and
\[
\pi^\star \in \arg\max_\pi V_{c^\star}(\pi)
\]
denote the optimal policy if the context were known.

Let $\pi_1,\pi_2,\ldots, \pi_K$ be a sequence of policies selected in $K$ episodes. The Bayesian regret  after \( K \) episodes is defined as
\[
\mathrm{BR}(K,T)
=
\sum_{k=1}^K
\big(
V_{c^\star}(\pi^\star)
-
V_{c^\star}(\pi_k)
\big).
\]

At episode $k$,  he posterior-expected value is  $\bar V_k 
$  (see \eqref{eq:posterior-expected})
  and let the maximal posterior-expected value be
\[
\bar V_k^\star := \max_\pi \bar V_k(\pi).
\]

Next, we perform the regret analysis for $C$-IDS  algorithm.  

Define the expected episodic regret
\[
\Delta_k(\pi) := \bar V_k^\star - \bar V_k(\pi),
\]
and the information gain at episode $k$
\[
I_k(\pi) := I(C;Y^k \mid \mathcal H_{k-1},\pi).
\]



Recall that $C$-IDS  algorithm yields a sequence of policies $\pi_1,\pi_2,\ldots, \pi_k$ where at episode $k$, the policy $\pi_k$         maximizes the information-directed objective (see also Remark~\ref{rmk})
 \[
  \pi_k = \arg\max_{\pi \in \Pi} \bigl[ \bar V_k(\pi) + \tau I_k(\pi) \bigr].
  \]
For each episode \( k \), decompose the regret as
\begin{equation}
\label{eq:decompostion}
\begin{aligned}
V_{c^\star}(\pi^\star) - V_{c^\star}(\pi_k)
&= \underbrace{
\bar V_k^\star - \bar V_k(\pi_k)
}_{\Delta_k} + \underbrace{
V_{c^\star}(\pi^\star) - \bar V_k^\star
}_{I_k^{(1)}}.\\
& +\underbrace{
\bar V_k(\pi_k) - V_{c^\star}(\pi_k)
}_{I_k^{(2)}} 
\end{aligned}
\end{equation}

\begin{algorithm}[t]
\caption{C-IDS Algorithm}
\label{alg:regret_algorithm}
\begin{algorithmic}[1]
\REQUIRE Context set $\mathcal C$, policy class $\Pi_\theta$, horizon $T$, temperature $\tau$, number of episodes $K$

\STATE Sample a true context $c^{\star} \sim P(c)$

\WHILE{$\Delta_k(\pi_k) > 0$}
    \STATE Sample context $c^{k} \sim P(C|\mathcal{H}_{k-1})$
    \STATE Execute policy $\pi_{k-1}$ in context $c^{k}$ to obtain observations
    \[
        y^{k} = (o_0^{k}, a_0^{k}, \dots, o_T^{k})
    \]
    \STATE Update posterior belief $P(C \mid \mathcal{H}_{k-1})$
    \STATE Compute optimal policy
    \[
    \pi_k \in \arg\max_\pi
    \Big(
    \bar V_k(\pi)
    +
    \tau I(C;Y^k \mid \mathcal H_{k-1},\pi)
    \Big).
    \]
\ENDWHILE
\STATE \textbf{return} $\pi^\star \coloneqq \pi_{k}$
\end{algorithmic}
\end{algorithm}

In the following subsections, we will provide a bound for each term in the decomposed regret. 
The following assumptions are made: 
\begin{assumption}[Bounded Rewards]
There exists a constant \( R_{\max} > 0 \) such that for all \( c \in \mathcal C \), \( s \in \mathcal S \), and \( a \in \mathcal A \),
\[
|r_c(s,a)| \le R_{\max}.
\]
\end{assumption}

This implies that for all \( c \) and \( \pi \),
\[
|V_c(\pi)| \le T R_{\max}.
\]

\begin{assumption}[Optimal Policy]
\label{assume:optimal}
For each episode $k$, there exists a Bayesian optimal policy $\pi_k^\star \in \Pi$ that incurs zero instantaneous regret, i.e.,
\[
\Delta_k(\pi_k^\star) = 0,
\]
where $\Pi$ denotes a given policy class.
\end{assumption}
\begin{remark}
\label{rmk}
In practice, the policy $\pi_k$ is obtained by  the variational policy gradient algorithm, interacting with a simulator that samples the context from the posterior distribution $P(C|\mathcal{H}_{k-1})$ in each iteration. When the policy converges, the converged policy is then applied to the true unknown environment to collect a new observation $Y^{k}$. 
As a result,  the following regret  is evaluated over  episodes interacting with the true CPOMDP rather than with the simulator.
\end{remark}

\subsection{Bound for Expected Episodic Regret $\Delta_k$}

First, we will prove that maximizing the objective function~\eqref{eq:objective_function} is equivalent to minimizing the information ratio. The proof was inspired by a standard solution for the fractional programming called Dinkelbach's method~\cite{Dinkelbach1967}. 

The \textit{linear information ratio} is defined as
\[
\Psi_k(\pi) := \frac{\Delta_k(\pi)}{I_k(\pi)},
\qquad I_k(\pi)>0.
\]

Directly minimizing $\Psi_k(\pi)$ is a fractional program.
Note that C-IDS optimizes the entropy-regularized objective,
\begin{equation*}
\begin{aligned}
J_k(\pi;\tau)
& := \bar V_k(\pi) + \tau I_k(\pi) \\
&= \bar V_k^\star + \bar V_k(\pi) - \bar V_k^\star  + \tau I_k(\pi) \\
&=  \bar V_k^\star + \tau I_k(\pi) - \Delta_k(\pi)
\end{aligned}
\end{equation*}
for $\tau>0$. 
Since the maximal posterior-expected value $\bar V_k^\star$ is independent of $\pi$, maximizing $J_k$ is equivalent to minimizing $\Delta_k(\pi) - \tau I_k(\pi)$. In other words, the policy $\pi_k$ selected at the $k$-th episode satisfies
\begin{equation*}
\label{eq:minus_form}
\pi_k \in \arg\min_\pi \big(\Delta_k(\pi) - \tau I_k(\pi)\big).
\end{equation*}

 \begin{lemma}
 \label{lem:upper-bound}
 For any $
\pi_k \in \arg\min_\pi \big(\Delta_k(\pi) - \tau I_k(\pi)\big)$, 
\[  \inf_{\pi: I_k(\pi)>0}
\frac{\Delta_k(\pi)}{I_k(\pi)} \le \Psi_k(\pi_k)
 \le \tau 
\]
whenever $I_k(\pi_k)>0$.
 \end{lemma}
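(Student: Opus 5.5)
The plan is to handle the two inequalities separately. The left one is immediate from the definition of the infimum; the right one follows by comparing $\pi_k$ with the Bayesian optimal policy $\pi_k^\star$ of Assumption~\ref{assume:optimal}, in the spirit of Dinkelbach's method.

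\emph{Left inequality.} Assume $I_k(\pi_k)>0$. Then $\pi_k$ lies in the feasible set $\{\pi : I_k(\pi)>0\}$ over which the infimum is taken. Hence
\[
\inf_{\pi: I_k(\pi)>0} \frac{\Delta_k(\pi)}{I_k(\pi)} \le \frac{\Delta_k(\pi_k)}{I_k(\pi_k)} = \Psi_k(\pi_k).
\]
Nothing beyond the definitions is needed here.

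\emph{Right inequality.} Here I would use the minimizing property of $\pi_k$ for the Lagrangian-type objective $\Delta_k(\pi)-\tau I_k(\pi)$. Assumption~\ref{assume:optimal} provides $\pi_k^\star\in\Pi$ with $\Delta_k(\pi_k^\star)=0$. Mutual information is nonnegative, so $I_k(\pi_k^\star)\ge 0$. Since $\pi_k$ minimizes over $\Pi$,
\[
\Delta_k(\pi_k) - \tau I_k(\pi_k) \le \Delta_k(\pi_k^\star) - \tau I_k(\pi_k^\star) = -\tau I_k(\pi_k^\star) \le 0.
\]
Rearranging gives $\Delta_k(\pi_k)\le \tau I_k(\pi_k)$. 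Dividing by $I_k(\pi_k)>0$ yields $\Psi_k(\pi_k)\le\tau$.

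\emph{Main obstacle.} The delicate point is the comparison step: I must exhibit a competitor in $\Pi$ whose Lagrangian value is nonpositive, and the argmin must be taken over that same class $\Pi$. This is exactly what Assumption~\ref{assume:optimal} supplies. If the argmin were instead over all policies, I would still take the competitor $\bar\pi\in\arg\max_\pi \bar V_k(\pi)$, which has $\Delta_k(\bar\pi)=0$ by definition of $\bar V_k^\star$. The argument also uses $\Delta_k\ge 0$ implicitly, to guarantee that the ratio $\Psi_k(\pi_k)$ is meaningful and nonnegative. It uses $I_k\ge 0$ explicitly, which holds because $I_k$ is a conditional mutual information.
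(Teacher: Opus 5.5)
Your proof is correct and matches the paper's: both compare $\pi_k$ with the Bayesian-optimal $\pi_k^\star$ of Assumption~\ref{assume:optimal}, which gives $\Delta_k(\pi_k)\le\tau I_k(\pi_k)$, and then divide by $I_k(\pi_k)>0$. The only difference is that the paper cites Lemma~\ref{lem:Lagrangian} for the left inequality, whereas you get it directly from the definition of the infimum because $\pi_k$ is feasible; that direct argument is all that is actually needed.
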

 To prove this Lemma, we first show some result from 
ratio minimization. 
\begin{lemma}[Lagrangian surrogate for ratio minimization]
\label{lem:Lagrangian}
Let
\[
\rho_k^\star := \inf_{\pi: I_k(\pi)>0}
\frac{\Delta_k(\pi)}{I_k(\pi)}.
\]
For any $\lambda > 0$, the following are equivalent:
\begin{enumerate}
  \item $\rho_k^\star \le \lambda$,
  \item $\inf_\pi \big(\Delta_k(\pi) - \lambda I_k(\pi)\big) \le 0$.
\end{enumerate}
\end{lemma}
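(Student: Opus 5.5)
The plan is to prove the two implications separately, using only two facts about $\Delta_k$ and $I_k$. First, $\Delta_k(\pi)\ge 0$ for every $\pi$, since $\bar V_k^\star$ is the maximum of $\bar V_k$. Second, $0\le I_k(\pi)\le H(C\mid\mathcal H_{k-1})\le \log|\mathcal C|$, since mutual information is nonnegative and bounded by the entropy of a finite-valued variable. The argument is Dinkelbach's: for a fixed policy with $I_k(\pi)>0$, the sign of $\Delta_k(\pi)-\lambda I_k(\pi)$ is the sign of $\Psi_k(\pi)-\lambda$, because
\[
\Delta_k(\pi)-\lambda I_k(\pi)=I_k(\pi)\bigl(\Psi_k(\pi)-\lambda\bigr).
\]
The work is in passing this pointwise statement through the two infima.

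\emph{(1)$\Rightarrow$(2).} Fix $\varepsilon>0$. By the definition of $\rho_k^\star$ there is a policy $\pi_\varepsilon$ with $I_k(\pi_\varepsilon)>0$ and $\Delta_k(\pi_\varepsilon)/I_k(\pi_\varepsilon)\le \rho_k^\star+\varepsilon\le\lambda+\varepsilon$. Multiplying through by $I_k(\pi_\varepsilon)$ gives
\[
\Delta_k(\pi_\varepsilon)-\lambda I_k(\pi_\varepsilon)\le \varepsilon\, I_k(\pi_\varepsilon)\le \varepsilon\log|\mathcal C|.
\]
Hence $\inf_\pi(\Delta_k-\lambda I_k)\le\varepsilon\log|\mathcal C|$, and letting $\varepsilon\to 0$ gives (2). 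The boundedness of $I_k$ is what lets the $\varepsilon$-slack in the ratio become a vanishing additive slack. Alternatively, this direction is immediate from Assumption~\ref{assume:optimal}: the policy $\pi_k^\star$ gives $\Delta_k(\pi_k^\star)-\lambda I_k(\pi_k^\star)=-\lambda I_k(\pi_k^\star)\le 0$.

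\emph{(2)$\Rightarrow$(1).} I would split on whether the infimum in (2) is strictly negative.
\begin{itemize}
\item If it is strictly negative, some $\pi$ satisfies $\Delta_k(\pi)<\lambda I_k(\pi)$. Since $\Delta_k\ge 0$, this forces $I_k(\pi)>0$, and dividing by $I_k(\pi)$ gives $\Psi_k(\pi)<\lambda$, so $\rho_k^\star<\lambda$.
\item If the infimum equals $0$ and is attained at a policy with $I_k>0$, then that policy has $\Psi_k=\lambda$ exactly, so $\rho_k^\star\le\lambda$.
\end{itemize}
For the purposes of Lemma~\ref{lem:upper-bound} this is essentially all that is needed. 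There the minimizer $\pi_k$ exists and satisfies $I_k(\pi_k)>0$, and comparing with $\pi_k^\star$ gives $\Delta_k(\pi_k)-\tau I_k(\pi_k)\le -\tau I_k(\pi_k^\star)\le 0$. Dividing by $I_k(\pi_k)$ yields $\Psi_k(\pi_k)\le\tau$ directly, and $\rho_k^\star\le\Psi_k(\pi_k)$ holds by definition of the infimum.

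The main obstacle is the degenerate boundary case of (2)$\Rightarrow$(1): the infimum is $0$ but is approached only by policies whose information gain $I_k(\pi)$ tends to $0$, or attained only at policies with $\Delta_k=I_k=0$. Then $\Delta_k/I_k$ need not approach $\lambda$. My first attempt would be to mix such a policy with an informative one and exploit the linearity of $\Delta_k$ under policy mixtures. This may fail, because $I_k$ is not linear under mixing when the mixing coin is not revealed in $Y^k$. If it does fail, I would state the equivalence under the non-degeneracy condition that the infimum in (2) is attained at a policy with $I_k>0$. That condition is exactly the hypothesis $I_k(\pi_k)>0$ already imposed in Lemma~\ref{lem:upper-bound}, and the direct argument above does not need more.
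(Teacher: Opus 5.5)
Your (1)$\Rightarrow$(2) argument and the strictly negative case of (2)$\Rightarrow$(1) are correct. They are also more careful than the paper's proof, which assumes the infimum $\rho_k^\star$ is attained and, for the converse, simply asserts that the witness $\pi$ with $\Delta_k(\pi)\le\lambda I_k(\pi)$ has $I_k(\pi)>0$. The gap is the degenerate case you leave open, and it cannot be closed: it is a counterexample to the statement.

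Your own alternative argument shows that (2) holds for \emph{every} $\lambda>0$, since $\pi_k^\star$ gives $\Delta_k(\pi_k^\star)-\lambda I_k(\pi_k^\star)=-\lambda I_k(\pi_k^\star)\le 0$. Unlike in Dinkelbach's setting, the denominator $I_k$ may vanish on part of the feasible set. There the Lagrangian is just $\Delta_k$, which equals $0$ at an uninformative Bayes-optimal policy. So the infimum in (2) is at most $0$ regardless of $\lambda$ and says nothing about the ratio. The lemma would therefore force $\rho_k^\star\le\lambda$ for all $\lambda>0$, i.e.\ $\rho_k^\star=0$, which fails whenever information is uniformly costly.

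For instance, take $T=1$, a context-independent initial observation, and posterior $(p,1-p)$ with $p>1/2$. Use uninformative actions $a_L$ (paying $1$ in context $1$ and $0$ in context $2$) and $a_R$ (the reverse), so no two contexts are $\epsilon$-policy invariant for $\epsilon<1$. Add a sensing action $a_S$ paying $0$ whose resulting observation reveals $C$. Writing $\pi(a)$ for the probability of playing $a$,
\[
\Delta_k(\pi)=(2p-1)\,\pi(a_R)+p\,\pi(a_S),\qquad I_k(\pi)=\pi(a_S)\,h(p),
\]
where $h$ is the binary entropy, so $\rho_k^\star=p/h(p)>0$. For $\lambda<p/h(p)$, (1) fails, yet (2) holds. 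It holds at $a_L$, where $\Delta_k=I_k=0$, and it holds even with the infimum restricted to $I_k>0$, by letting $\pi(a_S)\downarrow 0$ with $\pi(a_R)=0$. Both of your degenerate sub-cases occur.

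Your mixing idea cannot help, and the obstruction is sharper than non-linearity. Let $\pi_\alpha$ run an informative $\pi_1$ with probability $\alpha$ and an uninformative optimal $\pi_0$ otherwise. Then $P_{\pi_\alpha}(\cdot\mid c)$ is the corresponding mixture of channels. Since mutual information is convex in the channel for a fixed prior, $I_k(\pi_\alpha)\le\alpha I_k(\pi_1)$, while $\Delta_k(\pi_\alpha)=\alpha\Delta_k(\pi_1)$, so the ratio never decreases.

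The right conclusion is therefore not a proof but a corrected statement. One option is the strict version: $\rho_k^\star<\lambda$ if and only if $\inf_\pi(\Delta_k(\pi)-\lambda I_k(\pi))<0$. Its nontrivial direction is exactly your first bullet, since $\Delta_k\ge 0$ forces $I_k>0$. The other option is the non-strict version under your hypothesis that the infimum in (2) is attained at a policy with $I_k>0$.

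As you note, Lemma~\ref{lem:upper-bound} uses neither. $\Psi_k(\pi_k)\le\tau$ comes from the comparison with $\pi_k^\star$, and $\rho_k^\star\le\Psi_k(\pi_k)$ comes from the definition of the infimum, so the failure of this lemma does not propagate.
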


Thus, minimizing $\Delta_k(\pi)-\tau I_k(\pi)$ is a Lagrangian relaxation of minimizing
the information ratio $\Delta_k(\pi)/I_k(\pi)$ where $\tau$ is the  Lagrangian multiplier. 

With this result, we provide the proof of Lemma~\ref{lem:upper-bound}.

\begin{proof}[Proof of Lemma~\ref{lem:upper-bound}] Let $\pi_k^\star \in \arg\max_\pi \bar V_k(\pi)$. 
By Assumption~\ref{assume:optimal}, $\pi_k$ minimizes $\Delta_k(\pi)-\tau I_k(\pi)$, i.e., 
\[
\Delta_k(\pi_k) - \tau I_k(\pi_k)
\;\le\;
\Delta_k(\pi_k^\star) - \tau I_k(\pi_k^\star),
\]
By Assumption~\ref{assume:optimal} of the expected episodic regret, $\pi_k^\star$  satisfies
$\Delta_k(\pi_k^\star)=0$. By definition $\tau > 0$ and mutual information $I_k(\pi) \ge 0$, we have
\[
\Delta_k(\pi_k) - \tau I_k(\pi_k)
\;\le\;
-\tau I_k(\pi_k^\star) \;\le\; 0.
\]
Then  
\[
\Delta_k(\pi_k) \le \tau I_k(\pi_k).
\]
Whenever $I_k(\pi_k)>0$, this implies
\[
\Psi_k(\pi_k)
=
\frac{\Delta_k(\pi_k)}{I_k(\pi_k)}
\le
\tau.
\]
and by Lemma~\ref{lem:Lagrangian}, we have $$ \rho_k^\star := \inf_{\pi: I_k(\pi)>0}
\frac{\Delta_k(\pi)}{I_k(\pi)} \le \Psi_k(\pi_k)
 \le \tau.$$
\end{proof}

  Lemma~\ref{lem:upper-bound} indicates that by minimizing the information-directed objective function, C-IDS provides a bound $\tau$ on the minimal linear information ratio $\rho^\star_k$.



By the chain rule of mutual information,
\[
\sum_{k=1}^K I(C;Y^k \mid \mathcal H_{k-1})
=
I(C;Y^k)
\le
H(C)
\le
\log|\mathcal C|.
\]

Hence,
\begin{equation}
\label{eq:instant_regret_bound}
\sum_{k=1}^K \Delta_k
\le
\tau \log|\mathcal C|.
\end{equation}

\subsection{Bounding the Term $I_k^{(1)}$}
\begin{lemma}
\label{lem:I1}
The summation of $I_k^{(1)}$ has an upper bound
\begin{equation}
\label{eq:i1_bound}
\sum_{k=1}^K I_k^{(1)} \le \xi T R_{\max} \sqrt{\frac{K \log|\mathcal C|}{\eta}}.
\end{equation}
\end{lemma}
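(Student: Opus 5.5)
The plan is to bound each $I_k^{(1)} = V_{c^\star}(\pi^\star) - \bar V_k^\star$ by a quantity controlled by how far the posterior $P(C\mid\mathcal H_{k-1})$ is from the point mass at $c^\star$. Then Pinsker's inequality converts that distance into information, and Cauchy--Schwarz together with the chain rule for mutual information (already used to get \eqref{eq:instant_regret_bound}) sums the result over episodes.

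\textbf{Step 1: per-episode bound.} Since $\bar V_k^\star \ge \bar V_k(\pi^\star)$,
\[
I_k^{(1)} \le V_{c^\star}(\pi^\star) - \bar V_k(\pi^\star) = \sum_{c}\bigl(\mathbf 1\{c=c^\star\} - P(c\mid\mathcal H_{k-1})\bigr) V_c(\pi^\star).
\]
Using $|V_c(\pi)|\le T R_{\max}$, this gives
\[
I_k^{(1)} \le 2 T R_{\max}\,\bigl\|\delta_{c^\star} - P(\cdot\mid\mathcal H_{k-1})\bigr\|_{TV}.
\]
In Bayesian expectation over $c^\star\sim P(\cdot\mid\mathcal H_{k-1})$ this term has mean zero. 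So the nontrivial content is a high-probability or frequentist-style bound, and I will route it through the one-step posterior change. Concretely, I would compare the posterior before and after episode $k$: its expected total-variation movement is at most $\sqrt{\tfrac12 I_k(\pi_k)}$ by Pinsker, since $I_k = \Expect\, D_{KL}(P(\cdot\mid\mathcal H_k)\,\|\,P(\cdot\mid\mathcal H_{k-1}))$. A telescoping/martingale argument then relates the residual mass off $c^\star$ to accumulated information.

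\textbf{Step 2: information lower bound via Assumption~\ref{assume:not-policy-invariant}.} I would introduce $\eta>0$ as a uniform lower bound on the per-episode information gain whenever the posterior is not concentrated. The idea is that because contexts are pairwise not $\epsilon$-policy invariant, a policy that distinguishes them yields $I_k(\pi_k)\ge \eta\,\|\delta_{c^\star}-P(\cdot\mid\mathcal H_{k-1})\|_{TV}^2$. Combining this with Step 1 gives
\[
I_k^{(1)} \le \xi T R_{\max}\sqrt{I_k(\pi_k)/\eta},
\]
where $\xi$ absorbs the constant $2$ and the Pinsker factor.

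\textbf{Step 3: summation.} By Cauchy--Schwarz,
\[
\sum_{k=1}^K I_k^{(1)} \le \xi T R_{\max}\sqrt{\frac{K}{\eta}\sum_k I_k(\pi_k)}.
\]
The chain rule gives $\sum_k I_k \le H(C)\le\log|\mathcal C|$, and substituting this yields \eqref{eq:i1_bound}.

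\textbf{Main obstacle.} The hard part is Step 2: justifying that distance of the posterior from $c^\star$ is controlled by information gain with a uniform constant $\eta$. This must be stated as an identifiability-type condition derived from Assumption~\ref{assume:not-policy-invariant}, and I expect to have to posit it explicitly, defining $\eta$ and $\xi$ in the process. The remaining steps are routine.
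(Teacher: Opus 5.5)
Your argument has the same skeleton as the paper's proof. The paper also reduces to $I_k^{(1)}\le 2TR_{\max}\bigl(1-P_k(c^\star)\bigr)$; your $\|\delta_{c^\star}-P(\cdot\mid\mathcal H_{k-1})\|_{\mathrm{TV}}$ is exactly $1-P_k(c^\star)$. It then introduces $\eta$ by fiat and closes with Cauchy--Schwarz and the chain rule.

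\textbf{The middle link.} This is where you differ. The paper goes through the posterior entropy. It mixes $P_k$ with the uniform law at weight $\alpha_k=\sqrt{H(C\mid\mathcal H_{k-1})/\log|\mathcal C|}$ and uses a smoothed Pinsker argument to claim $1-P_k(c^\star)\le\xi\sqrt{H(C\mid\mathcal H_{k-1})}$. It then posits $\eta$ through $H(C\mid\mathcal H_{k-1})\le I_k/\eta$. You instead posit the composite $I_k\ge\eta\bigl(1-P_k(c^\star)\bigr)^2$ directly.

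\textbf{What each route buys.} The paper's factorization keeps its assumption free of $c^\star$, but its smoothing step does not deliver the claimed inequality. What it actually shows is $1-P_k(c^\star)\le 1-\alpha_k/|\mathcal C|+\sqrt{H/2}$, whose leading term tends to $1$ as $H\to 0$. No pathwise bound of the form $\xi\sqrt{H}$ can hold, since a posterior concentrated on some $c'\neq c^\star$ has $H\approx 0$ but $1-P_k(c^\star)\approx 1$. Your single posit is more transparent about where the unproved content lies, but it inherits the same defect. Since $I_k\le H(C\mid\mathcal H_{k-1})$, it forbids the posterior from ever concentrating on a wrong context. Assumption~\ref{assume:not-policy-invariant} cannot supply it either, because policy non-invariance says nothing about how much information $\pi_k$ gathers.

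\textbf{The mean-zero remark.} Your side remark that the Step~1 term has mean zero under $c^\star\sim P(\cdot\mid\mathcal H_{k-1})$ is wrong, because $\pi^\star$ depends on $c^\star$. It also steers you away from the route that works. Write $H(P_k)$ for the entropy of the realized posterior. In the Bayesian setting, using $1-p\le\log(1/p)$,
$\Expect\bigl[1-P_k(c^\star)\mid\mathcal H_{k-1}\bigr]=\sum_c P_k(c)\bigl(1-P_k(c)\bigr)\le\min\{1,H(P_k)\}\le\sqrt{H(P_k)}$.
Jensen then gives $\Expect\bigl[1-P_k(c^\star)\bigr]\le\sqrt{H(C\mid\mathcal H_{k-1})}$, which is precisely the inequality the paper wanted from smoothing, with $\xi=2$. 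Combine it with the paper's $c^\star$-free posit, Cauchy--Schwarz and the chain rule, all taken in expectation. The chain-rule bound $\sum_k I_k\le\log|\mathcal C|$ is a statement about history-averaged mutual informations anyway. This yields the lemma for $\Expect\sum_k I_k^{(1)}$ with no identifiability assumption that depends on $c^\star$.
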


\begin{proof}
See Appendix~\ref{app:proof}.
\end{proof}

\subsection{Bounding the Martingale Difference $I_k^{(2)}$}

\begin{lemma}
\label{lem:I2}
The expectation of summation of martingale difference term $I_k^{(2)}$ has an upper bound
\begin{equation}
\label{eq:i2_bound}
\Expect \left|\sum_{k=1}^K I_k^{(2)}\right|\le 2TR_{\max}\sqrt{2K}
\end{equation}
\end{lemma}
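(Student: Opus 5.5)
The plan is to treat $I_k^{(2)} = \bar V_k(\pi_k) - V_{c^\star}(\pi_k)$ as a martingale difference sequence with respect to the filtration $\mathcal F_{k-1} := \sigma(\mathcal H_{k-1})$. The sum is then controlled through its second moment, in the style of Azuma--Hoeffding or Burkholder.

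\textbf{Step 1: martingale-difference property.} The key observation is that $\pi_k$ is $\mathcal F_{k-1}$-measurable, since it is computed from the posterior $P(C\mid\mathcal H_{k-1})$. In the Bayesian setting $c^\star \sim P(c)$, the conditional law of $c^\star$ given $\mathcal H_{k-1}$ is exactly that posterior. Hence
\[
\Expect\bigl[V_{c^\star}(\pi_k)\mid \mathcal F_{k-1}\bigr] = \Expect_{c\sim P(C\mid\mathcal H_{k-1})}\bigl[V_c(\pi_k)\bigr] = \bar V_k(\pi_k),
\]
so $\Expect[I_k^{(2)}\mid\mathcal F_{k-1}]=0$. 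Measurability is mild but needs care: the policy used to collect $Y^k$ and the policy in $I_k^{(2)}$ must be indexed consistently with Algorithm~\ref{alg:regret_algorithm}. I would fix the convention that $\pi_k$ depends only on $\mathcal H_{k-1}$.

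\textbf{Step 2: boundedness.} By the bounded-rewards assumption, $|V_c(\pi)|\le TR_{\max}$ for every $c$ and $\pi$. The posterior average $\bar V_k(\pi_k)$ satisfies the same bound. Therefore $|I_k^{(2)}|\le 2TR_{\max}$ almost surely.

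\textbf{Step 3: sum the increments.} Set $M_K=\sum_{k=1}^K I_k^{(2)}$. By Jensen, $\Expect|M_K|\le \sqrt{\Expect[M_K^2]}$. Martingale increments are orthogonal: for $j<k$, conditioning on $\mathcal F_{k-1}$ gives $\Expect[I_j^{(2)} I_k^{(2)}]=0$. It follows that
\[
\Expect[M_K^2]=\sum_{k=1}^K \Expect\bigl[(I_k^{(2)})^2\bigr]\le 4KT^2R_{\max}^2,
\]
which yields $\Expect|M_K|\le 2TR_{\max}\sqrt{K}$. This already implies the stated bound $2TR_{\max}\sqrt{2K}$.

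If one prefers the Azuma route, which matches the $\sqrt{2}$ factor, the argument goes through the tail. Azuma gives $P(|M_K|\ge x)\le 2\exp\!\bigl(-x^2/(8KT^2R_{\max}^2)\bigr)$. One then integrates the tail, or picks the high-probability level and adds the failure mass times the trivial bound $2KTR_{\max}$. Either way the result is a constant multiple of $TR_{\max}\sqrt{K}$ that can be made at most $2TR_{\max}\sqrt{2K}$. I would present the second-moment argument, since it is cleaner and stronger, and note that it gives the stated bound.

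The main obstacle is Step 1: justifying that the true context is distributed according to the posterior given the history. This requires the Bayesian regret to be understood with $c^\star$ drawn from the prior, with the expectation taken over $c^\star$ as well. It also requires the posterior update to be exact Bayes conditioning on $\mathcal H_{k-1}$. I would state this explicitly as the interpretation of $\Expect$ in the lemma.
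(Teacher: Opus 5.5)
Your Step 3 has a genuine gap, and it is not the one you flagged. Step 1 is fine under the Bayesian reading: $\pi_k$ is a function of $\mathcal H_{k-1}$, and $c^\star$ has conditional law $P_k:=P(\cdot\mid\mathcal H_{k-1})$, so $\Expect[I_k^{(2)}\mid\mathcal H_{k-1}]=0$. But a zero conditional mean does not make $\{I_k^{(2)}\}$ a martingale difference sequence. You also need $I_k^{(2)}$ to be $\sigma(\mathcal H_k)$-measurable, and it is not, because $V_{c^\star}(\pi_k)$ depends on the unobserved $c^\star$. Your orthogonality claim relies on exactly this. To get $\Expect[I_j^{(2)}I_k^{(2)}]=0$ for $j<k$, you pull $I_j^{(2)}$ out of $\Expect[\,\cdot\mid\mathcal H_{k-1}]$. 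Doing the conditioning correctly gives
\[
\Expect\bigl[I_j^{(2)}I_k^{(2)}\mid\mathcal H_{k-1}\bigr]=\mathrm{Cov}_{c\sim P_k}\bigl(V_c(\pi_j),V_c(\pi_k)\bigr),
\]
which is generally nonzero, because every increment is driven by the same $c^\star$. Changing the filtration does not help either: if you add $c^\star$ to it, $I_k^{(2)}$ becomes predictable and its conditional mean is $I_k^{(2)}$ itself.

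In fact the inequality is false without an extra assumption. Take contexts with identical $P_i,E_i,\mu_i$ and different rewards, so that $Y^k$ carries no information about $C$. Then:
the posterior never moves, so $I_k(\pi)\equiv 0$;
the C-IDS objective is therefore the same in every episode, so $\pi_k\equiv\pi$;
hence $\sum_{k=1}^K I_k^{(2)}=K\bigl(\bar V(\pi)-V_{c^\star}(\pi)\bigr)$.
For a concrete instance, use one state, actions $a,b$, a uniform prior, and rewards $r_1(a)=-r_1(b)=R_{\max}$, $r_2(b)=-r_2(a)=R_{\max}/2$. Then $\pi$ always plays $a$, and $\bigl|\sum_k I_k^{(2)}\bigr|=\tfrac34 KTR_{\max}$. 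This exceeds $2TR_{\max}\sqrt{2K}$ for $K\ge 15$, and the two contexts are not $\epsilon$-policy invariant for any $\epsilon<TR_{\max}$.

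The paper's proof has the same structure as yours: zero conditional mean, bounded increments, then Azuma--Hoeffding with respect to $\{\mathcal H_k\}$. It first rewrites the term as $\bar G_k-G_k$, but the realized return $G_k$ is equally non-adapted, so switching to the Azuma route would not close the gap. What your Step 1 does establish, via the tower property, is $\Expect\bigl[\sum_k I_k^{(2)}\bigr]=0$. That signed identity is all the expected-regret theorem actually needs. A bound on $\Expect\bigl|\sum_k I_k^{(2)}\bigr|$ would require an additional posterior-concentration assumption, e.g.\ via $|I_k^{(2)}|\le 2TR_{\max}\bigl(1-P_k(c^\star)\bigr)$.
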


\begin{proof}
See Appendix~\ref{app:proof}.
\end{proof}

\subsection{Final Regret Bound}

\begin{theorem}[Bayesian Regret of Entropy-Regularized CPOMDP]
Consider a CPOMDP with finite context set $\mathcal C$, episodic horizon $T$, and true but unknown context $c^\star$.
Let the policy $\pi_k$ be chosen according to the entropy-regularized objective
\[
\pi_k \in \arg\max_\pi
\Big(
\bar V_k(\pi)
+
\tau I(C;Y^k \mid \mathcal H_{k-1},\pi)
\Big).
\]

Then the expected Bayesian regret after $K$ episodes satisfies
\begin{equation*}
\begin{aligned}
&\Expect\!\left[\mathrm{BR}(K,T)\right]
:=
\Expect\!\left[
\sum_{k=1}^K
\big(
V_{c^\star}(\pi_{c^\star}^\star)
-
V_{c^\star}(\pi_k)
\big)
\right] \\
& \;\le\;
\tau \log|\mathcal C| +
\xi T R_{\max}\sqrt{\frac{K\log|\mathcal C|}{\eta}} +
2 T R_{\max}\sqrt{KT},
\end{aligned}
\end{equation*}
for universal constants $\xi>0$.
In particular,
\[
\mathrm{BR}(K,T)
=
O\!\left(
T R_{\max}\sqrt{\frac{K\log|\mathcal C|}{\eta}}
\right).
\]
\end{theorem}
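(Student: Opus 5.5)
The plan is to sum the per-episode decomposition \eqref{eq:decompostion} over $k=1,\dots,K$, take expectations, and bound the three resulting sums with the results already established. Linearity gives
\[
\Expect[\mathrm{BR}(K,T)] = \Expect\Big[\sum_{k=1}^K \Delta_k(\pi_k)\Big] + \Expect\Big[\sum_{k=1}^K I_k^{(1)}\Big] + \Expect\Big[\sum_{k=1}^K I_k^{(2)}\Big].
\]
Each term is then treated separately.

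\textbf{The $\Delta_k$ term.} By Lemma~\ref{lem:upper-bound}, which uses Assumption~\ref{assume:optimal}, every C-IDS policy satisfies $\Delta_k(\pi_k)\le \tau I_k(\pi_k)$. This inequality also holds when $I_k(\pi_k)=0$, since in that case the proof of Lemma~\ref{lem:upper-bound} forces $\Delta_k(\pi_k)\le 0$. Summing over $k$ and applying the chain rule $\sum_k I(C;Y^k\mid\mathcal H_{k-1}) = I(C;\mathcal H_K)\le H(C)\le\log|\mathcal C|$ recovers \eqref{eq:instant_regret_bound}. Taking expectations over histories preserves the bound, so this term contributes $\tau\log|\mathcal C|$.

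\textbf{The $I_k^{(1)}$ term.} Lemma~\ref{lem:I1} gives \eqref{eq:i1_bound} directly, and this bound holds in expectation as well.

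\textbf{The $I_k^{(2)}$ term.} Use $\Expect[\sum_k I_k^{(2)}]\le \Expect|\sum_k I_k^{(2)}|$ and Lemma~\ref{lem:I2} to obtain $2TR_{\max}\sqrt{2K}$. I would check this term carefully, because it is the main obstacle. The statement writes $2TR_{\max}\sqrt{KT}$ rather than $\sqrt{2K}$. Since $\sqrt{2K}\le\sqrt{KT}$ whenever $T\ge 2$, Lemma~\ref{lem:I2} implies the stated term for any nontrivial horizon. If $T=1$ must be covered, I would either adjust the constant or re-derive the term by applying Azuma--Hoeffding to the per-step increments, each bounded by $2R_{\max}$, which yields a $\sqrt{KT}$ scaling.

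A more conceptual point is that $I_k^{(2)}=\bar V_k(\pi_k)-V_{c^\star}(\pi_k)$ is a martingale difference with respect to $\mathcal H_{k-1}$. This holds because $\pi_k$ is $\mathcal H_{k-1}$-measurable and, under the Bayesian prior, $c^\star\mid\mathcal H_{k-1}\sim P(C\mid\mathcal H_{k-1})$. Its expectation is therefore exactly zero, so the bound of Lemma~\ref{lem:I2} is in fact loose and harmless.

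Adding the three bounds gives the displayed inequality. For the order statement, $\tau$ and $\log|\mathcal C|$ are constants independent of $K$, so the first term is $O(1)$. The third term is $O(TR_{\max}\sqrt{K})$ for fixed $T$, and it is dominated by the second term up to the constant $\sqrt{\log|\mathcal C|/\eta}$, taking $\eta$ as a problem-dependent constant from Lemma~\ref{lem:I1}. This yields $\mathrm{BR}(K,T)=O\big(TR_{\max}\sqrt{K\log|\mathcal C|/\eta}\big)$.
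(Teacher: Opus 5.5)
Your proposal matches the paper's proof: you sum the decomposition \eqref{eq:decompostion} over $k$ and insert \eqref{eq:instant_regret_bound}, Lemma~\ref{lem:I1}, and Lemma~\ref{lem:I2}, and your added remarks on the case $I_k(\pi_k)=0$ and on the $\sqrt{2K}$ versus $\sqrt{KT}$ mismatch are correct refinements that the paper's one-line proof glosses over. Of your two fixes for $T=1$, rely on $\Expect[I_k^{(2)}\mid\mathcal H_{k-1}]=0$, which makes the third term vanish in expectation; the per-step Azuma fallback fails as stated, because the per-step Doob martingale increments of the episodic return can be as large as $2(T-t)R_{\max}$, not $2R_{\max}$.
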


\begin{proof}
The result follows directly from the decomposition in~\eqref{eq:decompostion}. By combining the bounds in ~\eqref{eq:instant_regret_bound},~\eqref{eq:i1_bound},~\eqref{eq:i2_bound} and applying the triangle inequality, the theorem is established.
\end{proof}


\begin{figure}[t]
  \vskip 0.2in
  \begin{center}
    \centerline{\includegraphics[width=\columnwidth]{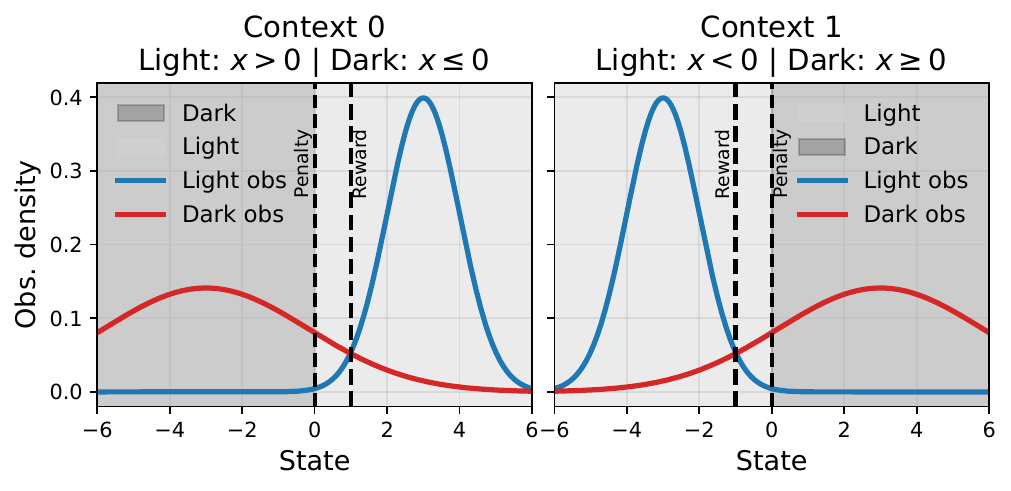}}
    \caption{The pictures show the light and dark environment. In context $0$, the light region is $x > 0$ and the dark region is $x \le 0$. In context $1$, the light region is $x < 0$ and the dark region is $x \ge 0$. In different regions, the agent has different observations noise. The observation models are Gaussian distributions shown by red and blue curves in the pictures. In context $0$, the reward region is $x > 1$ and the penalty region is $x < 1$. In context $1$, the reward region is $x < -1$ and the penalty region is $x > 0$.}
    \label{fig:env}
  \end{center}
\end{figure}

\begin{figure*}[t]
  \vskip 0.2in
  \centering

  \subfigure[Total return\label{fig:rewards}]{
    \includegraphics[width=0.31\textwidth]{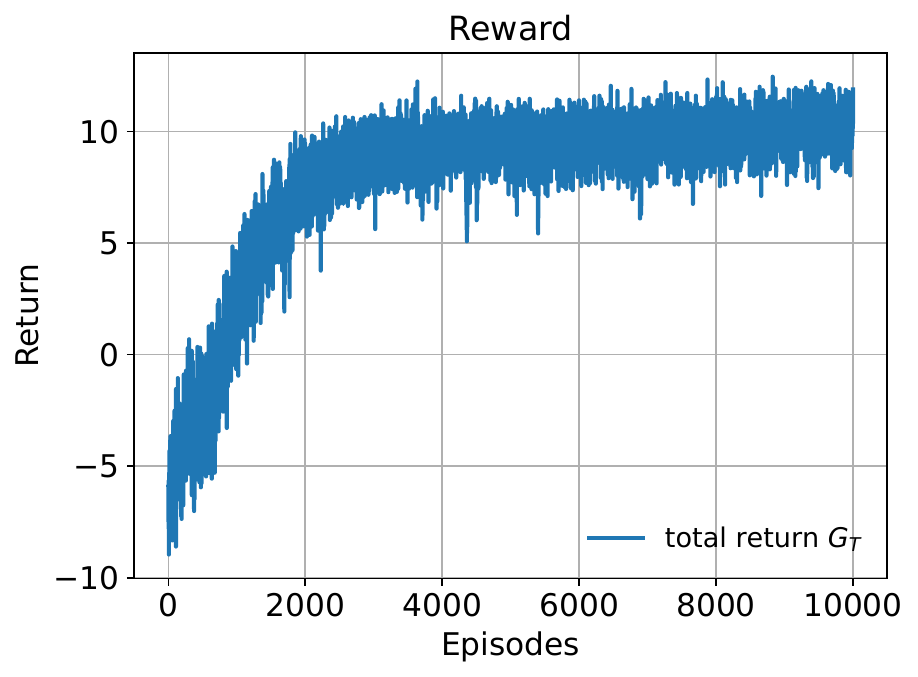}
  }
  \hfill
  \subfigure[Entropy\label{fig:entropy}]{
    \includegraphics[width=0.31\textwidth]{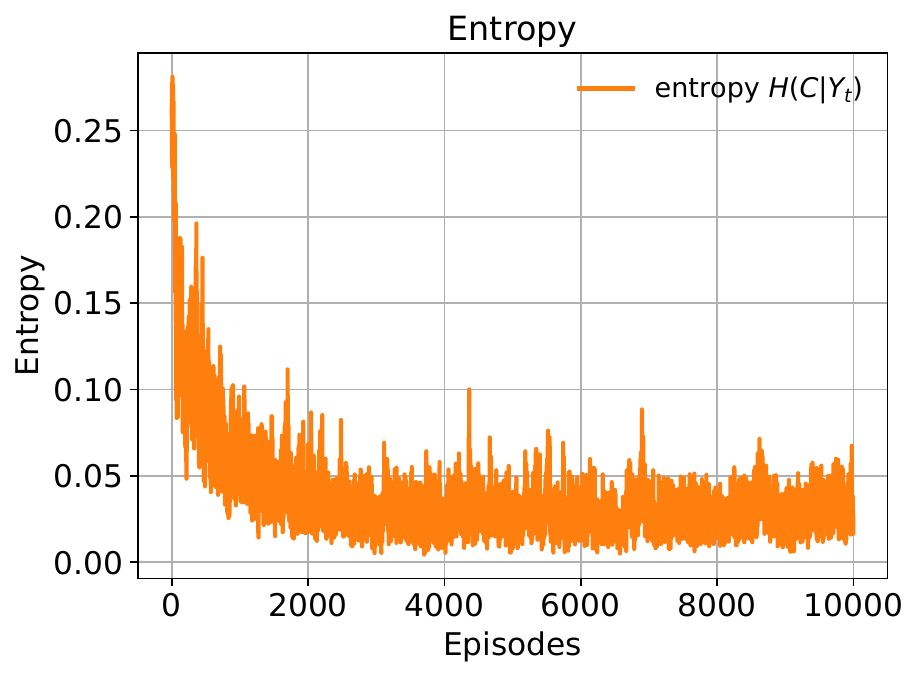}
  }
  \hfill
  \subfigure[Regret\label{fig:regret}]{
    \includegraphics[width=0.31\textwidth]{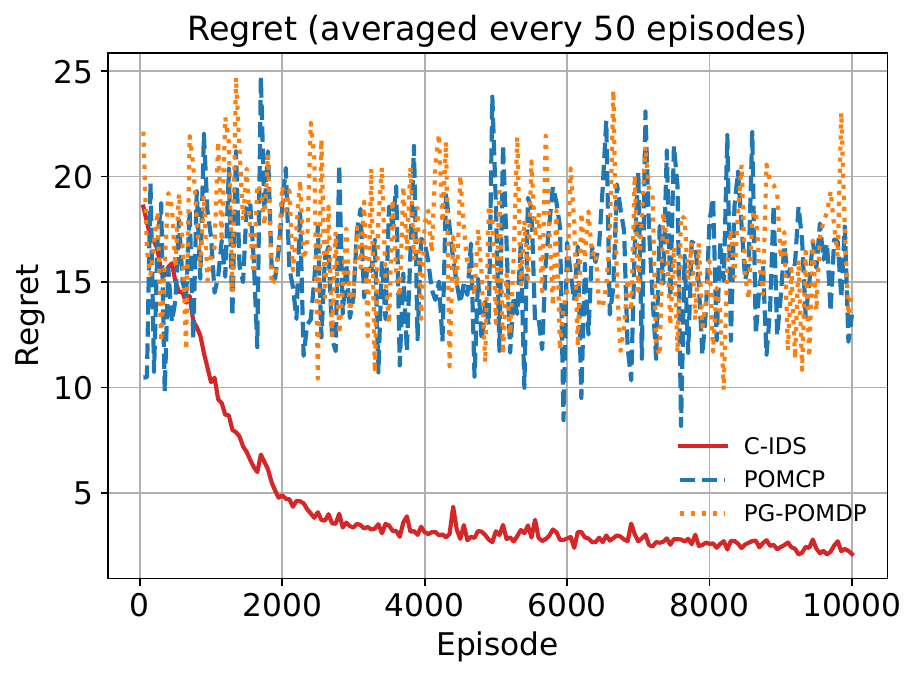}
  }

  \caption{Convergence results for total return, entropy, and regret for different algorithms.}
  \label{fig:convergence}
\end{figure*}

\section{Experiments}
\paragraph{Environment.}
To illustrate the effectiveness of the C-IDS algorithm using variational policy gradient solver, we consider a one-dimensional continuous-state Light--Dark environment~\cite{Fischer2020pf} with a latent context variable $c \in \{0,1\}$ that determines the spatial structure of informative regions. The goal of the agent is to reach the reward region shown in the Figure~\ref{fig:env} and stay there. In particular, the agent receives reward $1$ if it is in the reward region and receive the penalty $-1$ if it is in the penalty region. 
In each episode, the agent does not observe the immediate reward, but can only receive the cumulative discounted reward at the end of an episode.
This example captures the interaction between goal-reaching and information acquisition.

The initial state $s_0$ follows a Gaussian distribution $ \mathcal N(0,\sigma_u^2)$. The state is $x_t \in \mathbb{R}$ and the action space is
$\mathcal A=\{\texttt{l},\texttt{r},\texttt{o}\}$,  which correspond to moving left, moving right, and actively observing the state, respectively.
Given step size $\texttt{s}>0$ and dynamical noise variance $\sigma_p^2$, the dynamics are
\[
x_{t+1} = x_t + u(a_t) + \varepsilon_t,
\quad
\varepsilon_t \sim \mathcal N(0,\sigma_p^2),
\]
where
\[
u(a_t)=
\begin{cases}
-\texttt{s}, & a_t=\texttt{l},\\
+\texttt{s}, & a_t=\texttt{r},\\
0,  & a_t=\texttt{o}.
\end{cases}
\]
and $\varepsilon_t = 0$ if $a_t = \texttt{o}$. 

Observations are available only when $a_t=\texttt{o}$.
Let $\sigma_L^2$ and $\sigma_D^2$ denote the observation variances in the light
and dark regions, respectively. The observation variance is higher in the dark region than that in the light region, \ie, $\sigma^2_D >\sigma_L^2$.

The state-dependent observation noise variance is
\[
\sigma^2(x,c)=
\begin{cases}
\sigma_L^2, & (c=0,\ x>0) \; \lor \; (c=1,\ x<0),\\
\sigma_D^2, & (c=0,\ x\le 0) \; \lor \; (c=1,\ x\ge 0). \\
\end{cases}
\]
The observation model is
\[
z_{t+1} \mid x_{t+1}, c
\sim
\mathcal N\!\big(x_{t+1},\,\sigma^2(x_{t+1},c)\big),
\]
if $a_t = \texttt{o}$ and no observation is received when $a_t\in\{\texttt{l},\texttt{r}\}$.

Intuitively, to achieve a higher return, the agent must infer the underlying context and then decide whether to move left or right accordingly. One way to infer the context is to take the observation action $\texttt{o}$ to collect measurements and estimate the sample variance. However, such an observe-then-move strategy may be suboptimal, as time spent gathering observations delays progress toward the goal and incurs an opportunity cost. We implement the proposed solution with information-directed objectives.

\paragraph{Implementation.} We employ a long short-term memory (LSTM) neural network-based policy that directly processes observation sequences to action decisions for the agent. The Pytorch will automatically calculate the policy gradient term $\nabla_\theta \log  \pi_\theta(a_{t}| o_{0:t-1})$. To calculate the term $\log P_\theta(y)$, we use the extended Kalman
filter mechanism~\cite{gelb1974applied}. See the details in the Appendix~\ref{subapp:EKF}. 

In the experiment, the horizon of each episode is $T=20$.  We set the dynamical noise $\sigma_p^2 = 0.1$, the observation noises $\sigma_L^2 = 1$ and $\sigma_D^2 = 8$ and the initial state distribution variance $\sigma_u^2 = 0.09$. The temperature $\tau = 0.2$. During training, we set the prior distribution over contexts to be uniform, with $P(C = 0) = P(C = 1) = 0.5$. To obtain the policy in the $k$-th episode in the real environment, we employ variational policy gradient in a simulator with a fixed distribution $P_k(C|\mathcal{H}_{k-1})$ about the context. 
Figure~\ref{fig:rewards} and~\ref{fig:entropy} illustrates the convergence trend of the variational policy gradient algorithm \footnote{See the code in the supplementary material. The step size $\texttt{s} = 1.0$. We sample $M = 200$ trajectories for gradient approximation  for each iteration. The fixed learning rate of the policy gradient algorithm is set to be $0.001$. The hidden dimensions of all layers are set to be $64$. We run $N = 10000$ episodes for training on the 12th Gen Intel(R) Core(TM) i7-12700. The average time consumed for one iteration is $7.6$ seconds.}

Upon convergence, the conditional entropy $H(C | Y; \theta^\star)$ is approximately $0.03$, which is close to zero. This indicates that the observations provide significant information about the environment context. Furthermore, the total return under the optimal policy, $G_T^\star$, reaches approximately $10$, suggesting that the agent can  receive an optimal total return regardless of the environment context. The red line of Figure~\ref{fig:regret} shows that the regret decreases quickly in the first $2000$ episodes, which shows the efficiency of C-IDS algorithm. See visualization results in Appendix~\ref{app:visualization}.


\begin{figure}[t]
  \vskip 0.2in
  \begin{center}
    \centerline{\includegraphics[width=0.8\columnwidth]{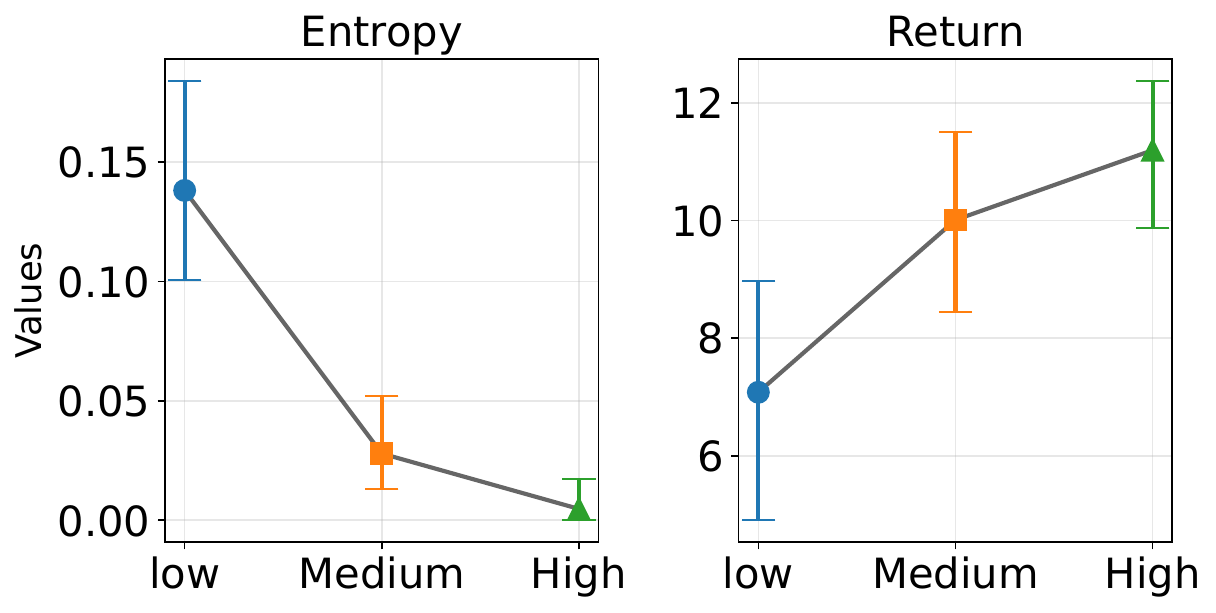}}
    \caption{The error bars for different variance ratio.}
    \label{fig:error_bars}
  \end{center}
\end{figure}

\paragraph{Result Discussion.} 
We observe that the agent identifies the underlying context by exploiting differences in observation noise. In particular, when the agent experiences larger observation noise, it infers that its current state lies in the dark region; otherwise, it is more likely to be in the light region. Figure~\ref{fig:error_bars} illustrates this behavior, where the ratio denotes the observation noise variance ratio $\sigma_D^2 : \sigma_L^2$, with values $4\!:\!1$, $8\!:\!1$, and $16\!:\!1$, respectively. As this ratio increases, the distinction between light and dark regions becomes more pronounced, making the context easier to identify due to the increased separability in observation variance. 
Moreover, Figure~\ref{fig:error_bars} shows that reducing context uncertainty directly improves performance: lower entropy corresponds to higher cumulative return, highlighting the benefit of active information acquisition for reward optimization. 


\begin{table}[t]
\centering
\begin{tabular}{@{}ccc@{}}
\toprule
 & Total Return & Entropy \\ \midrule
C-IDS & 10.572 ± 1.345 & 0.020 ± 0.015 \\
RDPG-RNN & -3.565 ± 3.045 & 0.164 ± 0.002 \\
POMCP & -6.504 ± 2.023 & 1.000 ± 0.000 \\ \bottomrule
\end{tabular}
\caption{The baseline comparison results.}
\label{tab:baseline}
\end{table}

\paragraph{Baseline Comparison.} 
We compare C-IDS against two standard POMDP solvers, POMCP~\cite{Silver2010pomcp, Fischer2020pf} and RDPG-RNN~\cite{heess2015memorybasedcontrolrecurrentneural}. To use POMDP solver, we consider the context is a latent state variable and then modeled the contextual POMDPs as an augmented-state POMDP with state $(s, c)$. Each algorithm is trained for $10{,}000$ episodes. As shown in Figure~\ref{fig:regret}, C-IDS exhibits significantly faster regret convergence than both baselines. This advantage arises because the baseline methods fail to reliably identify the latent context during training.
To further evaluate policy quality, we test the learned policies from all three algorithms on $200$ trajectories sampled from the same environment. The results are summarized in Table~\ref{tab:baseline}. C-IDS achieves the highest performance, with an average total return of $10.572$ and a final context entropy of $0.02$, indicating near-optimal context identification. In contrast, POMCP attains a total return of $-6.504$ with an entropy of $1.000$, suggesting that it fails to infer the context altogether and consequently performs poorly. RDPG-RNN yields a relatively better outcome, achieving a total return of $-3.565$ and an entropy of $0.164$, which indicates partial context identification but still falls short of C-IDS.  See visualization in Appendix~\ref{app:visualization}.

\section{Conclusion and Future Work}

In this paper, we investigated the planning problem in CPOMDP, where an agent with partial observations must simultaneously accomplish a task and infer an unknown latent context that governs the environment. We proposed contextual information direction sampling (C-IDS) algorithm, which, at each episode, selects a policy by maximizing an information-directed objective given the current posterior belief over the latent context while interacting with an unknown environment. Due to the structure of this objective, we develop a variational policy gradient method to compute the optimal policy used in each episode of the C-IDS using a simulator of the CPOMDP and the current posterior belief. 

From a theoretical standpoint, we prove that the proposed method achieves a sublinear Bayesian regret bound by explicitly connecting the C-IDS objective to a Lagrangian relaxation of the information ratio. Further, we  decompose the cumulative regret into three interpretable components, each capturing a distinct source of performance loss. We then bound these terms separately using  Dinkelbach’s method, Pinsker’s inequality with smoothing, and the Azuma–Hoeffding inequality. Together, these bounds establish the desired sublinear regret guarantee.
Empirically, experiments in a continuous Light–Dark environment demonstrate that C-IDS efficiently identifies the latent context, achieves near-optimal returns, and significantly outperforms standard POMDP solvers that do not explicitly reason about context uncertainty.

There are several promising directions for future work. 
First, while we focused on discrete action spaces and low-dimensional continuous states, scaling the method to high-dimensional continuous observations, such as images, LiDAR, or multimodal sensor data, remains an important challenge for real-world robotic systems. Secondly, extending the proposed framework to multi-agent settings, where multiple agents must jointly infer shared or private contexts. For example, through consensus-based belief sharing or decentralized information aggregation.
\section*{Impact Statement}
This paper presents work whose goal is to advance the field of machine learning. There are many potential societal consequences of our work, none of which we feel must be specifically highlighted here.

\bibliography{refs}
\bibliographystyle{icml2026}

\newpage
\appendix
\onecolumn

\section{Proofs of Lemmas}
\label{app:proof}

\paragraph{Proof of Lemma~\ref{lem:Gibbs}.}  Follow the definition of objective function, we have 
\begin{align*}
&\mathbb{E}_{Y_{0:T} \sim Q}\!\left[ r(Y_{0:T}) - \tau H(C \mid Y_{0:T}) \right] 
= \; \sum_{y} Q(y)\left[ r(y) - \tau H(C \mid y) \right] 
= \; \sum_{y} \sum_{c} Q(y) P(c \mid y)\, R_c(y) \\
&\quad + \tau \sum_{y} \sum_{c} Q(y) P(c \mid y)\, \log P(C = c \mid Y_{0:T}) 
= \; \sum_{y} \sum_{c} Q(c, y)\left[ R_c(y) + \tau \log P(C = c \mid y) \right].
\end{align*}
The entropy-regularized optimal distribution $Q^\ast(c,y)$ is therefore given by
\begin{equation*}
\begin{aligned}
Q^\ast(c, y)
 \propto \exp\!\left( \frac{R_c(y) + \tau \log P(C = c \mid y)}{\tau} \right) 
= P(C = c \mid y)\exp\!\left( \frac{R_c(y)}{\tau} \right).
\end{aligned}
\end{equation*}

\paragraph{Proof of Lemma~\ref{lem:vpg}.} By substituting the expression~\eqref{eq:target_distribution} of the target distribution into the above equation, we obtain
\begin{equation}
\mathcal{L}(\theta) 
= \Expect_{(y, c) \sim P_\theta} \!\left[\log \frac{P_\theta(y)P(c|y)}{P(c|y)\exp(R_c(y)/\tau)/Z}\right]\\ =\Expect_{(y, c) \sim P_\theta} \!\left[ \log P_\theta(y) - \frac{R_c(y)}{\tau} + \log Z \right]
\end{equation}
The gradient is
\begin{equation}
\nabla_\theta \mathcal{L}(\theta) = \Expect_{(y, c) \sim P_\theta} \!\left[ \left(\log P_\theta(y) - \frac{R_c(y)}{\tau} \right) \nabla_\theta \log P_\theta(y, c) \right]
\end{equation}
by applying likelihood ratio trick~\cite{Williams1992simple}. Also note that $\nabla_\theta \log P_\theta(Y =y, C=c) = \nabla_\theta \log [P_\theta(Y=y | C=c)P(C=c)] = \nabla_\theta \log P_\theta(Y=y | C=c)$. Therefore,
\begin{equation}
\nabla_\theta \mathcal{L}(\theta) = \Expect_{(y, c) \sim P_\theta} \!\left[\left(\log P_\theta(y) - \frac{R_c(y)}{\tau} \right) \nabla_\theta \log P_\theta(y|c) \right]
\end{equation}

According to~\cite{wei2025activeinferenceincentivedesign, shi2025integratedcontrolactiveperception}, the gradient of probabaility of observations given the context $c$ is 
\begin{equation}
\label{eq:log_observations_gradient}
\begin{aligned}
\nabla_\theta \log  P_\theta(y|c)=  \sum_{t = 0}^{T} 
\nabla_\theta \log  \pi_\theta(a_{t}| o_{0:t-1}).
\end{aligned}
\end{equation}
The probability of observations over all contexts is
\begin{equation}
\label{eq:probability_observations}
P_\theta(y) = \sum_i  P_\theta(y |c) P(c).
\end{equation}
Then we can obtain the value of the term $\log P_\theta(y)$. By equation~\eqref{eq:probability_observations}, ~\eqref{eq:log_observations_gradient}, we can calculate the gradient~\eqref{eq:variational_gradient}. 

Notably, the calculation of $P_\theta(y |c)$ requires some inference mechanism which depend on the applications. For example, in the discrete environment, we can use observable operator~\cite{spanczerObservableOperatorModels2016}. In the  continuous systems, we can use extended Kalman filter~\cite{gelb1974applied} or particle filter~\cite{doucet2001smc}. We list all used inference mechanism in the Appendix~\ref{app:inference}. Finally, Algorithm~\ref{alg:cpomdp_solver} (Appendix~\ref{app:summary}) summarizes the variational policy gradient method for solving the optimal policy in \eqref{eq:vpg_min} under a fixed prior distribution $P(C)$ of the context.

\paragraph{Proof of Lemma~\ref{lem:Lagrangian}.} 
If $\rho_k^\star \le \lambda$, then there exists $\pi$ such that
\[
\frac{\Delta_k(\pi)}{I_k(\pi)} \le \lambda,
\]
which implies $\inf_\pi \big(\Delta_k(\pi) - \lambda I_k(\pi)\big) \le 0$ because $I_k(\pi)>0$.
Conversely, if $\inf_\pi (\Delta_k(\pi)-\lambda I_k(\pi)) \le 0$, then there exists $\pi$
with $\Delta_k(\pi) \le \lambda I_k(\pi)$.
Since $I_k(\pi)>0$, this implies $\Delta_k(\pi)/I_k(\pi) \le \lambda$.

\paragraph{Proof of Proposition~\ref{lem:I1}}
Let $P_k(c)=\mathbb P(C=c\mid\mathcal H_{k - 1})$.
Since $\bar V_k^\star$ is the maximal posterior expected value,
\[
\bar V_k^\star
\ge
\sum_{c\in\mathcal C} P_k(c)\, V_c(\pi_{c^\star}^\star).
\]
Hence
\begin{align*}
I_k^{(1)} = &  V_{c^\star}(\pi_{c^\star}^\star) - \bar V_k^\star \\
= & \sum_{c\in\mathcal C} P_k(c)\, V_{c^\star}(\pi_{c^\star}^\star) - \bar V_k^\star \\
= &  \sum_{c\in\mathcal C} P_k(c)\, V_{c^\star}(\pi_{c^\star}^\star) - \max_{\pi} \sum_{c\in\mathcal C} P_k(c)V_c(\pi)\\
\le & \sum_{c\in\mathcal C} P_k(c)\, V_{c^\star}(\pi_{c^\star}^\star)  - \sum_{c\in\mathcal C} P_k(c)V_c(\pi_{c^\star}^\star) \\
= & 
\sum_{c\in\mathcal C} P_k(c)
\big(
V_{c^\star}(\pi_{c^\star}^\star)
-
V_c(\pi_{c^\star}^\star)
\big).
\end{align*} 
By bounded per-step rewards,
\[
|V_c(\pi)-V_{c'}(\pi)| \le 2T R_{\max},
\]
so
\begin{equation}
\label{eq:i1_1mp_bound}
\begin{aligned}
I_{k}^{(1)}
&\le \sum_{c\neq c^\star} P_k(c)\cdot 2T R_{\max} \\
&= 2T R_{\max} \sum_{c\neq c^\star} P_k(c) \\
&= 2T R_{\max}\big(1 - P_k(c^\star)\big).
\end{aligned}
\end{equation}

We bound the posterior error mass using Pinsker's inequality with smoothing~\cite{polyanskiy2025information}. Define the smoothed distribution
\[
Q_k := (1-\alpha_k) P_k + \alpha_k U,
\quad
U(c)=\frac{1}{|\mathcal C|},
\]
where
\[
\alpha_k := \sqrt{\frac{H(C\mid\mathcal H_{k-1})}{\log|\mathcal C|}}.
\]
According to the definition of  the total variation, we have
\[
P_k(c^\star) - Q_k(c^\star) \le \|P_k-Q_k\|_{\mathrm{TV}}.
\]

For the event $\{c^\star\}$, total variation yields
\[
1-P_k(c^\star)
\le
1-Q_k(c^\star) + \|P_k-Q_k\|_{\mathrm{TV}}.
\]
Since $Q_k(c^\star) =  (1-\alpha_k) P_k(c^\star) + \frac{\alpha_k}{|\mathcal C|}$ and $P_k(c^\star)>0$ for $c^\star$ is the true context, then $Q_k(c^\star)  \ge \alpha_k/|\mathcal C|$ and
\begin{equation*}
    \begin{aligned}
        \|P_k-Q_k\|_{\mathrm{TV}}& =\|P_k- (1-\alpha_k) P_k - \alpha_k U\|_{\mathrm{TV}} \\
        & = \alpha_k\|P_k-U\|_{\mathrm{TV}}, 
    \end{aligned}
\end{equation*}
we obtain
\[
1-P_k(c^\star)
\le
1-\frac{\alpha_k}{|\mathcal C|}
+
\alpha_k\|P_k-U\|_{\mathrm{TV}}.
\]
Applying Pinsker's inequality,
\[
\alpha_k\|P_k-U\|_{\mathrm{TV}} \le \sqrt{D_{\mathrm{KL}}(P_k\|U)}.
\]
Note that
$D_{\mathrm{KL}}(P_k\|U)=\log|\mathcal C|-H(C|\mathcal H_{k-1})$, we obtain
\[
1-P_k(c^\star)
\le
1-\frac{\alpha_k}{|\mathcal C|}
+
\alpha_k\sqrt{\tfrac12\big(\log|\mathcal C|-H(C\mid\mathcal H_{k-1})\big)}.
\]
Using $\sqrt{a-b}\le\sqrt a$ when $b\ge 0$ and the definition of $\alpha_k$,
\begin{equation*}
\begin{aligned}
    \alpha_k\sqrt{\tfrac12(\log|\mathcal C|-H(C\mid\mathcal H_{k-1}))}
&\le \alpha_k  \sqrt{\frac{1}{2} \log|\mathcal C|} \\
&=\sqrt{\tfrac12\,H(C\mid\mathcal H_{k-1})},
\end{aligned}
\end{equation*}
and since $\alpha_k = \sqrt{H(C|\mathcal H_{k-1})/\log|\mathcal C|}$, the remaining term is of the
same order. Therefore, there exists a universal constant $\xi>0$ such that
\begin{equation}
\label{eq:1mp_bound}
1-P_k(c^\star)
\le
\xi \sqrt{H(C\mid\mathcal H_{k-1})}.
\end{equation}
Also since the $\mathcal{C}$ is finite, the value of $H(C|\mathcal H_{k-1})$ must be finite. And the episode number  is finite, there must exist a universal constant $\eta>0$ such that
\begin{equation*}
H(C\mid\mathcal H_{k-1}) \le \frac{1}{\eta} I(C;Y^k \mid \mathcal{H}_{k-1}). 
\end{equation*}
Then similarily, by the chain rule of mutual information,
\begin{equation}
\label{eq:entropy_sum_bound}
\sum_{k=1}^K H(C\mid\mathcal H_{k-1}) \le \frac{1}{\eta} \sum_{k=1}^K I(C;Y^k \mid \mathcal H_{k-1})
= \frac{\log|\mathcal C|}{\eta}.
\end{equation}
Therefore,
\begin{equation}
\begin{aligned}
\sum_{k=1}^K  I_k^{(1)} &\le 2T R_{\max} \sum_{k=1}^K  \big(1 - P_k(c^\star)\big)\\
& \le \xi T R_{\max} \sum_{k=1}^K  \sqrt{H(C\mid\mathcal H_{k-1})}\\
& \le \xi T R_{\max} \sqrt{ K \sum_{k=1}^K H(C\mid\mathcal H_{k-1})} \\
& \le \xi T R_{\max} \sqrt{\frac{K \log|\mathcal C|}{\eta}}.
\end{aligned}
\end{equation}
Then first inequality is by~\eqref{eq:i1_1mp_bound}, the second inequality is by~\eqref{eq:1mp_bound}, the third inequality is by Cauchy–Schwarz inequality, and the last inequality is by~\eqref{eq:entropy_sum_bound}.

\paragraph{Proof of Lemma~\ref{lem:I2}} We first prove that $\{I_k^{(2)}\}$ forms a martingale difference sequence. 

\begin{lemma}
For all \( k \),
\[
\Expect[I_{k}^2 \mid \mathcal H_{k-1}] = 0.
\]
\end{lemma}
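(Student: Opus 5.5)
The plan is to use the Bayesian structure of the problem: the true context $c^\star$ is drawn from the prior $P(c)$ (line 1 of Algorithm~\ref{alg:regret_algorithm}), and the observations $\mathcal H_{k-1}$ are generated under $c^\star$. Consequently, conditional on $\mathcal H_{k-1}$, the random variable $c^\star$ has exactly the posterior law $P_k(c) = P(C=c \mid \mathcal H_{k-1})$. The statement is about $I_k^{(2)} = \bar V_k(\pi_k) - V_{c^\star}(\pi_k)$ from \eqref{eq:decompostion}. The identity $\Expect[I_k^{(2)}\mid\mathcal H_{k-1}]=0$ should then reduce to the observation that $\bar V_k(\pi_k)$ is by definition the posterior average of $V_c(\pi_k)$.

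\textbf{Step 1: measurability of $\pi_k$.} I will argue that $\pi_k$ is $\mathcal H_{k-1}$-measurable. The objective $\bar V_k(\pi) + \tau I(C;Y^k\mid\mathcal H_{k-1},\pi)$ depends on the data only through the posterior $P(C\mid\mathcal H_{k-1})$ and the known models $\{M_i\}$. Hence the maximizer, fixed by a deterministic tie-breaking rule or by any selection independent of $c^\star$ given $\mathcal H_{k-1}$, is a function of $\mathcal H_{k-1}$. The same reasoning gives that $\bar V_k(\pi_k) = \sum_c P_k(c)V_c(\pi_k)$ is $\mathcal H_{k-1}$-measurable.

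\textbf{Step 2: conditional expectation of $V_{c^\star}(\pi_k)$.} For each fixed $\pi$ and $c$, $V_c(\pi)$ is a deterministic number, bounded by $TR_{\max}$, so all the expectations below exist. By Step 1 and the posterior law of $c^\star$,
\[
\Expect\bigl[V_{c^\star}(\pi_k)\mid\mathcal H_{k-1}\bigr] = \sum_{c\in\mathcal C} P(c^\star=c\mid\mathcal H_{k-1})\,V_c(\pi_k) = \sum_{c\in\mathcal C}P_k(c)V_c(\pi_k) = \bar V_k(\pi_k).
\]
Subtracting this from the $\mathcal H_{k-1}$-measurable quantity $\bar V_k(\pi_k)$ yields $\Expect[I_k^{(2)}\mid\mathcal H_{k-1}] = 0$. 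Combined with $|I_k^{(2)}|\le 2TR_{\max}$, this makes $\{I_k^{(2)}\}$ a bounded martingale difference sequence, which is what Azuma--Hoeffding needs for Lemma~\ref{lem:I2}.

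\textbf{Main obstacle.} The delicate point is justifying that $c^\star \mid \mathcal H_{k-1}$ is distributed as $P_k$. This needs two things. First, the posterior update must be exact Bayesian conditioning under the true generative model. Second, the trajectories in $\mathcal H_{k-1}$ must be collected in the true environment with context $c^\star$, as described in Remark~\ref{rmk}, and not under the resampled $c^k$ of the simulator. If the algorithm's sampling step is read literally, I would instead condition on $c^\star$ being the context that generated each $Y^j$ and state this as the operative modeling assumption. I would also make explicit the regularity needed to define conditional expectations: $\mathcal C$ is finite and the values are bounded.
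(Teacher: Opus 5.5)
Your proposal is correct and uses essentially the same argument as the paper: $\pi_k$ is $\mathcal H_{k-1}$-measurable, and $\Expect[V_{c^\star}(\pi_k)\mid\mathcal H_{k-1}]$ equals the posterior average $\bar V_k(\pi_k)$. Where the paper simply says ``by definition'', you spell out the Bayesian assumption that step relies on, namely that $c^\star$ is drawn from the prior and $P(C\mid\mathcal H_{k-1})$ is the exact posterior computed from data generated under $c^\star$ in the true environment.
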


\begin{proof}
Since \( \pi_k \) is \( \mathcal H_{k-1} \)-measurable,
\[
\bar V_k(\pi)
:= \Expect_{c\sim P(C| \mathcal{H}_{k-1})} \!\left[V_c(\pi) \right],
 \]
Therefore,
\[
\Expect[I_k^{(2)} \mid \mathcal H_{k-1}]
= \Expect[\bar V_k(\pi_k) \mid \mathcal{H}_{k-1}] - \mathbb{E}[V_{c^\star}(\pi_k) \mid \mathcal{H}_{k-1}].
\]
By definition, $\bar V_k(\pi) := \mathbb{E}_{c\sim P(\cdot| \mathcal{H}_{k-1})} [V_c(\pi)]$. Thus:
\[
\mathbb{E}[\Delta_k \mid \mathcal{H}_{k-1}]
= \bar V_k(\pi_k) - \bar V_k(\pi_k) = 0.
\]
\end{proof}

Therefore, $\{I_k^{(2)}\}$ forms a martingale difference sequence.

Let the realized episodic return in episode $k$ be
\[
G_k := \sum_{t=0}^{T-1} r_{c^\star}(S_t^{k},A_t^{k}),
\quad
\bar G_k := \Expect[G_k \mid \mathcal H_{k-1}],
\]
and define
\[
I_k^{(2)} := \bar G_k - G_k.
\]

Since $\bar G_k=\mathbb E[G_k\mid \mathcal H_{k-1}]$,
\[
\Expect[I_k^{(2)} \mid \mathcal H_{k-1}]
=
\Expect[\bar G_k-G_k\mid \mathcal H_{k-1}]
=
\bar G_k-\bar G_k
=
0,
\]
so $\{I_k^{(2)}\}_{k=1}^K$ is a martingale difference sequence w.r.t.\ $\{\mathcal H_k\}$.

If $|r_c(s,a)|\le R_{\max}$, then $|G_k|\le TR_{\max}$ and $|\bar G_k|\le TR_{\max}$, hence
\[
|I_k^{(2)}|
=
|\bar G_k-G_k|
\le
|\bar G_k|+|G_k|
\le
2TR_{\max}.
\]

By the Azuma--Hoeffding inequality~\cite{rakhlin17a},
\begin{equation}
\Expect \left|\sum_{k=1}^K I_k^{(2)}\right|\le 2TR_{\max}\sqrt{2K}
\end{equation}

\section{Inference Mechanism}
\label{app:inference}
\subsection{Observable Operator}


Let $N =\lvert{\calS \rvert}$ be the total number of product states in $M$. We introduce an indexing of the product state set $\calS$.  
The transition function $P$ in the CPOMDP with domain $\calS \times \calA$ and codomain $\dist{\mathcal{S}}$ is directly extended to the domain $\{1,\ldots, N\}\times \mathcal{A}$ and co-domain $\dist{\{1,\ldots, N\}}$  using the state indices.

Consider an action $a\in \calA$, the $i,j$-th entry of the transposed transition matrix $\matT^a \in \reals^{N\times N }$   is defined to be:
\[
\matT_{i,j}^a = P(i|j,a), \forall i , j \in \{1,\ldots, N\} 
\]
which is the probability of reaching state $i$ from state $j$ with action $a$.
Similarly, we index the observation set $\cal O$ as $\{1,\ldots, M\}$ and refer to each $o \in \calO$ by its index. 
For each $a\in \calA$, Let $\matO^a \in \reals^{M \times N}$ be the observation probability matrix with $\matO_{o,j}  =E( o |  j)$. 
\begin{definition} 
Given the CPOMDP $M$, for any pair of observation and  action $(o,a)$,
 the observable operator given  action $\matA_{o|a}$ is a matrix of size $N \times N$ with its $ij$-th entry defined as $$
 \matA_{o |a }[i,j] =  \matT_{i , j}^a \matO_{o,j}  \ ,$$
 which is the probability of transitioning from state $j$ to state $i$ after taking action $a$ and at the state $j$,  an observation $o$ is emitted.
  In matrix form, 
\[
\matA_{o |a } = \matT^a \text{diag}(\matO_{o, 1} , \dots, \matO_{o , N} ).
\]
\end{definition}
From~\cite{wei2025activeinferenceincentivedesign}, the probability
\begin{equation}
\label{eq:matrix_operation_c}
P(o_{0:t} | a_{0:t}, i) = \mathbf{1}_N^\top \matA_{o_t|a_t} \dots \matA_{o_0|a_0} \mu_i.
\end{equation}
given the context $i$. And then 
\begin{equation}
P_\theta(y |i)  =  \frac{P(o_{0:t} | a_{0:t}, i) }{P(o_{0} | a_{0}, i) }  \prod_{i=0}^{t} \pi_\theta(a_{i}| o_{0:i-1}).
\end{equation}

\subsection{Extended Kalman Filter}
\label{subapp:EKF}

For the convenience and efficiency, to implement the algorithm using extended Kalman filter, we introduce a observation indicator $m_t$ where $m_t = 1$ if the action $a_t \neq o$. That means the observation is masked. Let
\[
y_{1:T} := (a_{1:T}, z_{1:T}, m_{1:T})
\]
denote a trajectory consisting of actions $a_t$, observations $z_t$, and observation indicators $m_t \in \{0,1\}$.
Conditioned on context $c$, we approximate the belief over the continuous latent state $x_t$ by a Gaussian
\[
x_t \mid y_{1:t-1}, c \sim \mathcal N(\mu_t, \Sigma_t),
\]
initialized as $(\mu_1, \Sigma_1) = (\mu_{0,c}, \Sigma_{0,c})$.

Given action $a_t$, the predicted belief is
\[
\mu_{t|t-1} = \mu_t + f(a_t), \qquad
\Sigma_{t|t-1} =
\begin{cases}
\Sigma_t + Q, & a_t \in \{\texttt{l}, \texttt{r}\}, \\
\Sigma_t, & a_t = \texttt{o},
\end{cases}
\]
where $f(\texttt{l})=-\Delta$, $f(\texttt{r})=\Delta$, $f(\texttt{o})=0$, and $Q$ denotes the process noise variance.

If an observation is acquired ($m_t=1$ and $a_t=\texttt{o}$), we assume the observation model
\[
z_t = x_t + v_t, \qquad v_t \sim \mathcal N(0, \mu_{t|t-1}),
\]
where $R_c(\cdot)$ is the context-dependent observation variance.
The Kalman update is given by
\[
K_t = \frac{\Sigma_{t|t-1}}{\Sigma_{t|t-1} + \mu_{t|t-1}},
\]
\[
\mu_{t+1} = \mu_{t|t-1} + K_t (z_t - \mu_{t|t-1}), \qquad
\Sigma_{t+1} = (1 - K_t)\Sigma_{t|t-1}.
\]
If no observation is available, the belief remains at the predicted state.

The trajectory likelihood conditioned on context $c$ is
\[
P(y_{1:T} \mid c)
=
\prod_{t=1}^T
\pi_\theta(a_t \mid h_t)
\;
\Big[
\mathcal N\!\bigl(z_t;\mu_{t|t-1}, \mu_{t|t-1}\bigr)
\Big]^{\mathbb I\{m_t=1,\; a_t=\texttt{o}\}},
\]
where $h_t = (z_{1:t-1}, m_{1:t-1})$ denotes the observation history.

Equivalently, the log-likelihood is
\[
\log P(y_{1:T} \mid c)
=
\sum_{t=1}^T \log \pi_\theta(a_t \mid h_t)
+
\sum_{t=1}^T
\mathbb I\{m_t=1,\; a_t=\texttt{o}\}
\log \mathcal N\!\bigl(z_t;\mu_{t|t-1}, \mu_{t|t-1}\bigr).
\]

\section{Summary of Variational Policy Gradient Algorithm}
\label{app:summary}

\begin{algorithm}[H]
\caption{Variational CPOMDP Solver}
\label{alg:cpomdp_solver}
\begin{algorithmic}[1]
\REQUIRE Context set $\mathcal C$, policy class $\Pi_\theta$, horizon $T$, temperature $\tau$, number of episodes $K$

\STATE Initialize policy parameters $\theta^{0}$

\FOR{$k = 1$ to $K$}
    \STATE Sample context $c^{k} \sim P(C)$
    \STATE Execute policy $\pi_{\theta^{k-1}}$ in context $c^{k}$ to obtain observations
    \[
        y^{k} = (o_0^{k}, a_0^{k}, \dots, o_T^{k})
    \]
    \STATE Compute posterior belief $P(C \mid y^{k})$
    \STATE Compute entropy-regularized return
    \[
        \mathcal J(y^{k}, c^{k})
        =
        \log P_\theta(y^{k})
        -
        \frac{R_{c^{k}}(y^{k})}{\tau}
    \]
    \STATE Estimate policy gradient
    \[
        \nabla_\theta \mathcal L(\theta^k)
        \approx
        \mathcal J(y^{k}, c^{k})
        \sum_{t=0}^T
        \nabla_\theta
        \log \pi_{\theta^k}(a_t^{k} \mid o_{0:t-1}^{k})
    \]
    \STATE Update policy parameters
    \[
        \theta^{k+1}
        \gets
        \theta^{k}
        +
        \alpha \nabla_\theta \mathcal L(\theta^{k})
    \]
\ENDFOR

\STATE \textbf{return} $\pi_{\theta^\star} \coloneqq \pi_{\theta^{K}}$
\end{algorithmic}
\end{algorithm}

\begin{figure*}[t]
  \vskip 0.2in
  \centering

  \subfigure[C-IDS policy under context 0\label{fig:policy_ctx0}]{
    \includegraphics[width=0.45\textwidth]{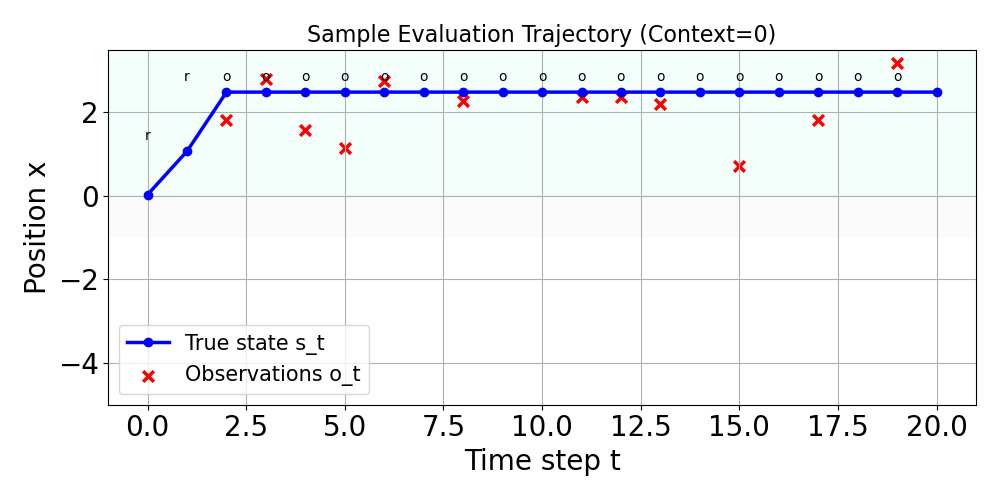}
  }
  \hfill
  \subfigure[C-IDS policy under context 1\label{fig:policy_ctx1}]{
    \includegraphics[width=0.45\textwidth]{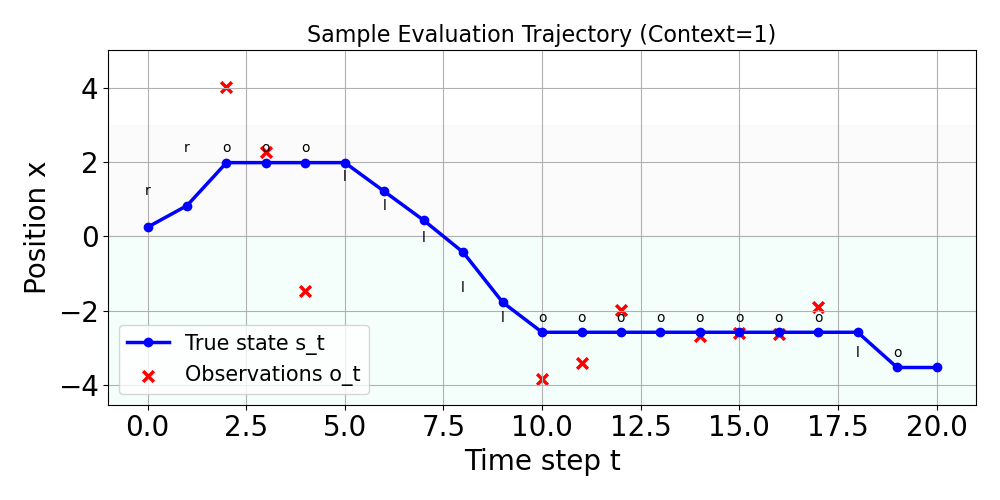}
  }

  \vskip 0.15in

  \subfigure[Policy comparison (context 0)\label{fig:policy_compare_ctx0}]{
    \includegraphics[width=0.45\textwidth]{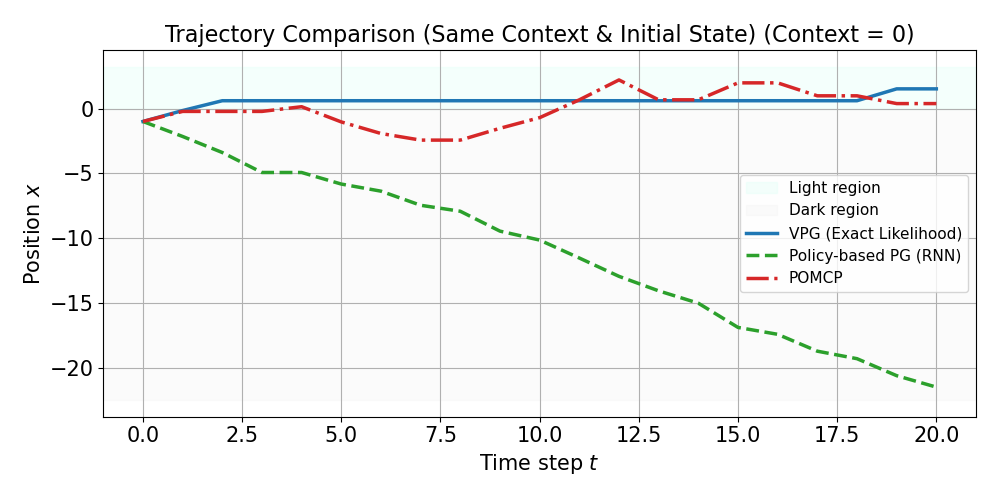}
  }
  \hfill
  \subfigure[Policy comparison (context 1)\label{fig:policy_compare_ctx1}]{
    \includegraphics[width=0.45\textwidth]{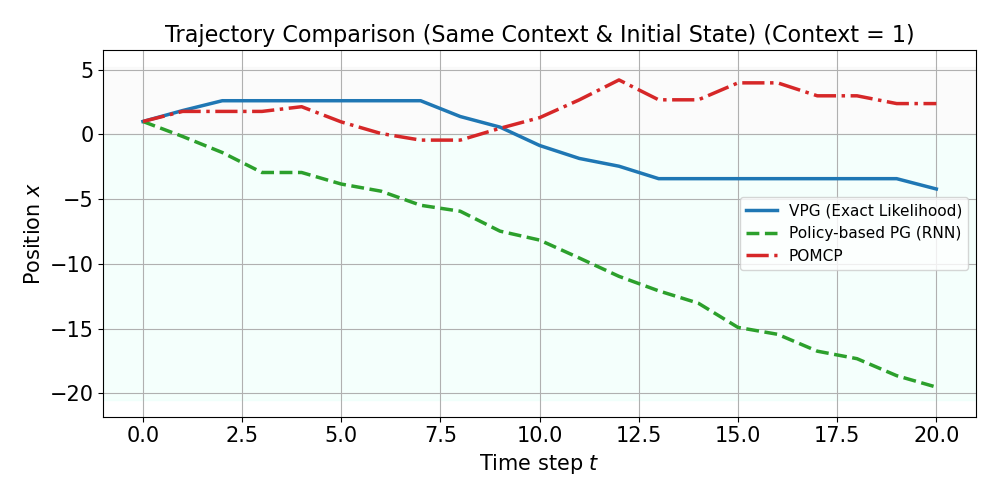}
  }

  \caption{Trajectories generated under C-IDS policy and comparisons with baseline methods.}
  \label{fig:policy}
\end{figure*}

\section{Visualization Results}
\label{app:visualization}
We generate some trajectories under the learned optimal policy, as shown in Figures~\ref{fig:policy_ctx0} and~\ref{fig:policy_ctx1}. The agent initially moves to the right in both cases to gather informative observations. In context~$0$ (Figure~\ref{fig:policy_ctx0}), this behavior directly leads the agent into the reward region, allowing it to obtain positive returns. In contrast, in context~$1$ (Figure~\ref{fig:policy_ctx1}), the agent detects the increased observation variance through active sensing, correctly infers the underlying context, and subsequently reverses direction to move left and reach the reward region.

We further visualize trajectories generated under different policies in Figures~\ref{fig:policy_compare_ctx0} and~\ref{fig:policy_compare_ctx1}. As shown, the POMCP policy consistently moves to the right, while the RDPG--RNN policy consistently moves to the left, regardless of the underlying context. This behavior indicates that both baseline methods fail to detect and adapt to the latent context, resulting in policies that commit to a single direction. In contrast, C-IDS dynamically adjusts its actions by inferring the context from observations, enabling context-aware directional decisions and improved task performance.


\end{document}